\newtheorem{thm}{Theorem}[section]
\newtheorem{cor}[thm]{Corollary}
\newtheorem{prop}[thm]{Proposition}
\newtheorem{rem}[thm]{Remark}
\newtheorem{example}[thm]{Example}
\numberwithin{equation}{section}
\def\D{\mathcal D}
\def\r{\text{\bf r}}
\newcommand{\rv}[1]{{\tt{#1}}}
\begin{document}

\author{Klaas Slooten}
\address{Netherlands Forensic Institute, P.O. Box 24044, 2490 AA The Hague, The Netherlands}
\email{k.slooten@nfi.minvenj.nl}
\author{Ronald Meester}
\address{VU University Amsterdam, De Boelelaan 1081, 1081 HV Amsterdam, The Netherlands}
\email{rmeester@few.vu.nl}
\title[Familial DNA Database Searching]{Forensic Identification: Database likelihood ratios and Familial DNA Searching}

\date{\today}
\keywords{Familial Searching, Kinship Analysis, DNA-Databases, Bayesian Inference, Weight of Evidence}
\begin{abstract}
Familial Searching is the process of searching in a DNA database for relatives of a certain individual. It is well known that in order to evaluate the genetic evidence in favour of a certain given form of relatedness between two individuals, one needs to calculate the appropriate likelihood ratio, which is in this context called a Kinship Index. Suppose that the database contains, for a given type of relative, at most one related individual. Given prior probabilities for being the relative for all persons in the database, we derive the likelihood ratio for each database member in favour of being that relative. This likelihood ratio takes all the Kinship Indices between the target individual and the members of the database into account. We also compute the corresponding posterior probabilities.
We then discuss two methods to select a subset from the database that contains the relative with a known probability, or at least a useful lower bound thereof. One method needs prior probabilities and yields posterior probabilities, the other does not. We discuss the relation between the approaches, and illustrate the methods with familial searching carried out in the Dutch National DNA Database.
\end{abstract}

\maketitle

\section{Introduction} Many countries maintain databases that contain forensic DNA profiles of traces and of certain known individuals, 
e.g.\ convicted offenders or suspects of certain crimes. These databases were originally set up to directly identify an unknown offender by looking for matching DNA profiles. However, since DNA is inhereted from parent to child, it is also possible to use them to look for the offender's relatives, rather than the offender himself, if the offender's DNA profile turns out not to be in the database. This last process is called familial (DNA) searching, and is carried out in several jurisdictions (e.g.\ the UK, some US states and New Zealand). As a result there have been some high profile successes (see e.g.\ \cite{miller} for the Grim Sleeper case). The Netherlands have recently adopted a law that allows familial searching in some cases.

Previous studies on familial DNA searching have mostly concentrated on empirical determination of the rank that a relative (of some target profile) in the database occupies, when the database is ordered according to decreasing likelihood ratio with the target, or according to decreasing number of shared alleles with the target. See e.g.\ \cite{bieber} for simulations that also includes a geographical component (based on US states), or \cite{cubu}, \cite{myers} and \cite{HicksII}. In \cite{ge}, false exclusion rates and false inclusion rates for various thresholds on the likelihood ratio and/or number of shared alleles are estimated by simulation. These rates are averages over different target profiles. 

The contributions of this paper to the research on familial searching are as follows. In part, we rediscover and collect several different statistical aspects of familial searching that are currently scattered in the literature. But moreover, we present and extend these results in a unified mathematical framework that allows for a comparison of the different approaches to familial searching. In particular we extensively discuss the different frequentist meaning of the probabilities involved. 

In this paper, more specifically, we define various search strategies that allow one to determine or control the probability with which the relative can be found, if indeed it is in the database. We also explain how these search strategies can deal with heterogeneous databases in which the amount of information stored for its members may differ between members (e.g., different sets of autosomal loci). It turns out that there are different strategies that are equally effective, and we discuss their different probabilistic interpretations. 

The first method, which we call the {\em conditional method}, gives a probabilistic model that allows one to obtain posterior probabilities for relatedness that take all the database information (and the prior probabilities) into account. Focus on the posterior can also be found in \cite{cavallini}, but indirectly as the result of Bayesian network computation, whereas we derive these probabilities without needing such a tool.

Our second method, which we call the {\em target-centered method}, selects a subset from the database that contains a relative with a certain probability. It essentially weighs false-negative (not selecting a true relative) probabilities against false positive (selecting an unrelated individual). The literature contains accounts of such approaches (e.g. \cite{ge}) but these studies focus on the average over all target profiles, whereas we focus on what happens in a particular case by taking the target profile as starting point.

We investigate the effectiveness of a familial search for a specific DNA target profile, rather than the average over all profiles. As is to be expected, if the target profile has more rare alleles, its relatives are easier to find then if it has more common alleles. This effect has been already noted (quantitavely) in \cite{CowenThomson} and (heuristically) in \cite{HicksII}. We illustrate the results by a comprehensive simulation study using artificial targets in the actual Dutch National DNA database. Finally, we look at how hard half-sibling would be to find either with the appropriate index (the half-sibling index) or with the sibling index. It turns ou that half-siblings are, of course, quite hard to find but that a search with sibling index does not perform very much worse than one with the half-sibling index.

In a familial search setting, we have a target item that we compare with database items among which, we suppose, at most one `related' item is present. If so, we wish to find it, and we do so by computing likelihood ratios in favour of being a special item between the target and every database member. 
From these, we compute (given prior probabilities) posterior probabilities for relatedness between the target and a database member that take {\it all} the computed one-to-one likelihood ratios into account. 

What distinguishes a familial search from an ordinary kinship computation is of course the fact that we compute likelihood ratios with a whole database. Databases are bound to yield chance matches, meaning in this case a strong indication for a non-existing relationship. This is reminiscent of the classical database controversy, and some elements of that discussion reappear in the familial searching context. For example, an attempt to take into account the fact that a database search has been done is described in \cite{SWGDAM} where the SWGDAM Ad Hoc Committee on Partial Matches recommends that the kinship index between a database member and the target be divided by $N$, the size of the database, and to only further investigate this possible lead if that quotient is sufficiently large. As we shall see, this number does not represent the likelihood ratio in favour of relatedness. 

In our opinion familial searching has the potential to lead to the same kind of debate as the classical database controversy. As we will point out in the text, such a classical database search can in fact be viewed as a special case in the model that we present.

Finally, let us mention that this paper can be viewed as a sequel to \cite{eiland}, in which we have discussed the effectiveness of database searches where one is looking for a perfect match. In that case, only two likelihood ratios arise as the result of the comparison of the target's profile to a database profile: zero (in case of exclusion) or $1/p$ (in case of a match, and $p$ is the profile frequency). In this article we generalize this to the search for a database member that has a certain family relation to the target (including the possibility of identity).

\section{Database likelihood ratios}
\label{secDB}
We consider a database of DNA profiles, and we suppose at most one of these profiles comes from a person that is related to the offender, whose DNA profile we call the target profile. The form of relatedness is fixed: we are looking for a specific relative (e.g., the offender's father, the offender's brother, or simply the offender himself), not for {\it any} relative. We also assume that the database members are unrelated to each other. In reality, the database may of course contain several related individuals. We expect this to have a negligible influence when these people are only related to each other and not to the target. In the case however where the database contains more than one relative of the target, the situation becomes more complex. Nonetheless it is clear that the target becomes easier to identify in that case. Exactly how much easier will depend on the number and type of these relatives and warrants a separate publication. In this article we restrict ourselves to the case where at most one relative may be present in the database, and we believe that the results are indicative of those to be obtained in the more general setting.

Let $\rv{P_1},\dots,\rv{P_N}$ represent the DNA profiles in database $\D$, by letting them be random variables that are all distributed as either $\rv{S}$ (``special") or $\rv{G}$ (``generic"), such that at most one of the $\rv{P_i}$ (the ``special" member) is distributed as $\rv{S}$. For example, if we are looking for brothers then ${\rv S}$ selects a DNA profile from a brother of the target, according to the probability distribution for DNA profiles of brothers of the target. By ${\rv R}=i$ we mean that ${\rv P_i}$ is distributed according to $\rv{S}$, for $1 \leq i \leq N$. By ${\rv R} \notin \D$ we mean that all ${\rv P_i}$ for $1 \leq i \leq N$ are distributed according to ${\rv G}$, i.e., the database does not contain a special item. Then ${\rv R} \in \D$ has the obvious meaning. We assume that we know the ``prior" probabilities \[ \pi_i=P(\rv{R}=i)\] for all $i$. Further, let $\pi_\D=\sum_{i=1}^N\pi_i$ and $\pi_0=1-\pi_D$. 

We also suppose that all $\rv{P_i}$ are conditionally independent given which of them (if any) is distributed as $\rv{S}$. This reflects that all population members are unrelated to each other and to the offender, except for the relative. 

In our DNA context, the $\rv{P_i}$ take values in the set of DNA profiles. The distribution of $\rv{S}$ will depend on the relation between the offender and the special member. If we are looking for the offender himself, then $\rv{S}$ will have a point distribution: it is equal to the DNA profile of the offender. If we are looking for the offender's father, then $\rv{S}$ will be equal to DNA profile $g$ with probability equal to the probability that the offender's father has DNA profile $g$. On the other hand, the random variable $\rv{G}$ always represents a DNA profile of a random population member, and therefore the probability that $\rv{G}=g$ is equal to the expected population frequency of DNA profile $g$, irrespective of the type of search we are performing (looking for a direct match, a parent or child, sibling, et cetera.)


The likelihood ratio $\rv{LR}_{\rv{S},\rv{G}}(\rv{P_i})$ expresses the support in favour of $\rv{R}=i$: it says how many times more likely it is that database member $i$ has the observed DNA profile if it is equal to the desired relative, than if it is unrelated to the offender. Such a likelihood ratio is often called PI (Paternity Index) when looking for a parent or child, and SI (Sibling Index) when looking for a sibling.

If every database member is subjected to such a likelihood ratio calculation, we obtain a vector of likelihood ratios, that we denote by
\[ \rv{LR_D}=(r_1,\dots,r_N)={\bf r}.\]
We also define 
\[ r_0=1,\]
which will be useful below.

We will use these likelihood ratios from the database to make probabilistic statements concerning the identity of $\rv{R}$. In the next chapter we will use those results to define search strategies for $\rv{R}$ in the database. One strategy relies on the following result, whose proof we defer to the Appendix.
\begin{prop}
\label{postpi}
For $i=1,\ldots, N$, we have
\begin{equation}
\label{nummereen}
P(\rv{R}=i \mid \rv{LR_D}=\r) = \frac{r_i \pi_i}{\sum_{k=0}^Nr_k\pi_k}
\end{equation}
and
\begin{equation}
\label{nummertwee}
P(\rv{R}=i \mid \rv{LR_D}=\r, \rv{R} \in \D) = \frac{r_i\pi_i}{\sum_{k=1}^Nr_k\pi_k}.
\end{equation}
\end{prop}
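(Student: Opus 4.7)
The plan is to apply Bayes' theorem to $P(\rv{R}=i \mid \rv{LR_D}=\r)$ and then reduce the ratio of conditional likelihoods to a single factor $r_i$ using the conditional independence assumption. First I would introduce the shorthand $\rv{R}=0$ for the event $\rv{R}\notin\D$, so that $\{\rv{R}=0,\rv{R}=1,\dots,\rv{R}=N\}$ is a partition with prior masses $\pi_0,\pi_1,\dots,\pi_N$. Bayes then gives
\[
P(\rv{R}=i \mid \rv{LR_D}=\r) = \frac{P(\rv{LR_D}=\r \mid \rv{R}=i)\,\pi_i}{\sum_{k=0}^N P(\rv{LR_D}=\r \mid \rv{R}=k)\,\pi_k},
\]
so the task becomes evaluating the likelihoods $P(\rv{LR_D}=\r \mid \rv{R}=k)$ up to a common factor.

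Next I would exploit the conditional independence of $\rv{P_1},\ldots,\rv{P_N}$ given $\rv{R}$. Under $\rv{R}=k$ with $k\ge 1$, the coordinate $\rv{P_k}$ is distributed as $\rv{S}$ and all others as $\rv{G}$; under $\rv{R}=0$ every coordinate is distributed as $\rv{G}$. Since $\rv{LR_D}$ is a coordinatewise function of the profiles, dividing by the $\rv{R}=0$ likelihood cancels all $\rv{G}$-factors except the $k$th and leaves
\[
\frac{P(\rv{LR_D}=\r \mid \rv{R}=k)}{P(\rv{LR_D}=\r \mid \rv{R}=0)} = \frac{P(\rv{LR}_{\rv{S},\rv{G}}(\rv{S})=r_k)}{P(\rv{LR}_{\rv{S},\rv{G}}(\rv{G})=r_k)}.
\]
The crux is then the standard identity that for any statistic of the form $L(x)=f_{\rv{S}}(x)/f_{\rv{G}}(x)$ and any value $\ell$ in its range, $P(L(\rv{S})=\ell)=\ell\,P(L(\rv{G})=\ell)$; this follows by summing $f_{\rv{S}}(x)=\ell f_{\rv{G}}(x)$ over the level set $\{x:L(x)=\ell\}$ (here the DNA profiles are discrete, so no measure-theoretic care is needed). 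Applied with $\ell=r_k$ this shows the displayed ratio equals $r_k$, and with the convention $r_0=1$ the formula also holds at $k=0$.

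Substituting these ratios back into the Bayes expression, after dividing numerator and denominator by $P(\rv{LR_D}=\r \mid \rv{R}=0)$, immediately yields \eqref{nummereen}. Equation \eqref{nummertwee} is then a one-line consequence: conditioning additionally on $\rv{R}\in\D$ simply drops the $k=0$ term from the normalizing sum, since $\{\rv{R}=i\}\cap\{\rv{R}\in\D\}=\{\rv{R}=i\}$ for $i\ge 1$. I expect the only real obstacle to be the likelihood-ratio identity for $L(\rv{S})$ versus $L(\rv{G})$, which is the place where the particular structure of $\rv{LR}_{\rv{S},\rv{G}}$ enters; everything else is bookkeeping around the Bayes formula and the conditional independence of database members.
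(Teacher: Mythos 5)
Your proposal is correct and follows essentially the same route as the paper's own proof: Bayes' theorem over the partition $\{\rv{R}=0,1,\dots,N\}$, conditional independence of the $\rv{P_k}$ given $\rv{R}$ to cancel all but one coordinate, and the key identity $P(\rv{LR}(\rv{S})=x)=x\,P(\rv{LR}(\rv{G})=x)$, which is exactly the paper's Proposition~\ref{rec}. The only (cosmetic) difference is that you normalize by the $\rv{R}\notin\D$ likelihood so each term contributes a single factor $r_k$, whereas the paper divides by the $\rv{R}=i$ likelihood and works with the reciprocal posterior, picking up the pair of factors $r_k$ and $1/r_i$.
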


\begin{rem}
{\rm A version of the above formula (\ref{nummereen}) has, in the context of DNA mixtures, also been obtained in \cite{CHF} (see also \cite{CHF2}). Their formulation uses probabilities rather than likelihood ratios in numerator and denominator.}
\end{rem}

\begin{rem}
\label{opmerking}
{\rm Note that the conditional probabilities in (\ref{nummereen}) and (\ref{nummertwee}) do not depend on the distribution of $\rv{S}$ and $\rv{G}$.} 
\end{rem}
It follows from Proposition \ref{postpi} that, for any subset $\D' \subset \D$
\begin{equation}
\label{post1}
P(\rv{R}\in \D'\mid \rv{LR_D}=\r) = \frac{\sum_{i\in \D'}r_i\pi_i}{\sum_{k=0}^Nr_k\pi_k}.
\end{equation}
In particular, with $\D'=\D$ we obtain
\[ \frac{P(\rv{R} \in \D\mid \rv{LR_D}=\r )}{P(\rv{R} \notin \D\mid \rv{LR_D}=\r )}=\frac{\sum_{i=1}^Nr_i\pi_i}{\pi_0},\]
and the likelihood ratio in favour of $\rv{R}\in\rv{D}$ is given by
\begin{equation}
\label{LRD} \frac{P(\rv{LR_D}=\r \mid \rv{R}\in\D)}{P(\rv{LR_D}=\r \mid \rv{R}\notin \D)}=\frac{\sum_{i=1}^Nr_i\pi_i}{\pi_\D}.
\end{equation}
Note that the likelihood ratio depends on the prior probabilities. 

\begin{cor}
In odds form, we obtain
\[ \frac{P(\rv{R}=i \mid \rv{LR_D}=\r)}{P(\rv{R}\neq i \mid \rv{LR_D}=\r)}=\frac{r_i \pi_i}{\sum_{k=0, k \neq i}^Nr_k\pi_k},\]
and the likelihood ratio in favour of $\rv{R}=i$ is given by
\begin{equation} \label{persLR} \frac{P(\rv{LR_D}=\r \mid \rv{R}=i)}{P(\rv{LR_D}=\r \mid \rv{R}\neq i)} = \frac{r_i(1-\pi_i)}{\sum_{k=1, k \neq i}^Nr_k\pi_k+\pi_0}. 
\end{equation}
\end{cor}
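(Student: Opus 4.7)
The plan is to derive both equations by direct algebraic manipulation of Proposition \ref{postpi}, combined with Bayes' rule. The first statement, the posterior odds form, follows almost immediately: from (\ref{nummereen}) I would read off $P(\rv{R}=i \mid \rv{LR_D}=\r) = r_i\pi_i/S$ where $S = \sum_{k=0}^N r_k\pi_k$, so $P(\rv{R}\neq i \mid \rv{LR_D}=\r) = (S - r_i\pi_i)/S = \sum_{k=0, k\neq i}^N r_k\pi_k / S$. Taking the quotient cancels $S$ and produces the stated expression.

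For the second statement, I would invoke Bayes' theorem in odds form: the likelihood ratio for $\rv{R}=i$ versus $\rv{R}\neq i$ equals the posterior odds divided by the prior odds. The prior odds are $\pi_i/(1-\pi_i)$. Dividing the posterior odds just obtained by this prior ratio cancels the $\pi_i$ in the numerator and multiplies the denominator by $(1-\pi_i)$, yielding $r_i(1-\pi_i)/\sum_{k=0, k\neq i}^N r_k\pi_k$. To match (\ref{persLR}) exactly, I would then use the convention $r_0=1$ to split off the $k=0$ contribution from the sum in the denominator, writing $\sum_{k=0, k\neq i}^N r_k\pi_k = \pi_0 + \sum_{k=1, k\neq i}^N r_k\pi_k$.

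There is no real obstacle here; the corollary is a bookkeeping consequence of Proposition \ref{postpi}. The only mildly delicate point is the treatment of the index $k=0$, which encodes the event $\rv{R}\notin\D$ through the convention $r_0=1$ and prior $\pi_0$. Because the original posterior formula sums over $k=0,\ldots,N$ while the target formula (\ref{persLR}) separates the $k=0$ term, I would be careful to present the three-way decomposition of the denominator (the $k=i$ term removed, the $k=0$ term extracted, and the remaining indices $k\in\{1,\ldots,N\}\setminus\{i\}$) explicitly before simplifying, to make the bookkeeping transparent.
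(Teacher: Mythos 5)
Your derivation is correct and is exactly the route the paper intends: the corollary is stated without separate proof as immediate bookkeeping from Proposition \ref{postpi}, namely forming the posterior odds from (\ref{nummereen}) and dividing by the prior odds $\pi_i/(1-\pi_i)$, with the $k=0$ term handled via the convention $r_0=1$, $\pi_0=1-\pi_\D$. Nothing further is needed.
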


In the case where the prior distribution of $\rv{R}$ on $\D$ is uniform, the above derived formulas simplify, and for convenience we include them here. With $P(\rv{R}=i)=\pi_\D/N$ for all $1\leq i \leq N$ we obtain, as special case of \eqref{LRD},
\begin{equation}
\label{uni}
\frac{P(\rv{LR_D}=\r \mid \rv{R}\in\D)}{P(\rv{LR_D}=\r \mid \rv{R}\notin \D)}=\frac{1}{N}\sum_{i=1}^N r_i,
\end{equation}
which is independent of the prior $\pi_{\D}$, contrary to the general case. In this uniform case, we see that the results $\rv{LR_D}=\r$ favour $\rv{R} \in \D$ if and only if the \emph{average} likelihood ratio on $\D$ is greater than one.

The posterior probability that $\rv{R}=i$ is now equal to
\[ P(\rv{R}=i \mid \rv{LR_D}=\r) = \frac{r_i}{\sum_{k=1}^Nr_k+N\frac{1-\pi_{\D}}{\pi_{\D}}},\]
with corresponding likelihood ratio
\begin{equation}
\label{LRPi} \frac{P(\rv{LR_D}=\r \mid \rv{R}=i)}{P(\rv{LR_D}=\r \mid \rv{R}\neq i)}=\frac{r_i}{\frac{\pi_{\D}}{N-\pi_{\D}}(\sum_{k=1, k \neq i}^Nr_k)+\frac{N(1-\pi_\D)}{N-\pi_{\D}}}.
\end{equation}
In the even more specific case that $\pi_D=1$, i.e., the database surely contains a relative and any of the members can be the relative with equal a priori probability, we simply get
\[ P(\rv{R}=i \mid \rv{LR_D}=\r)=\frac{r_i}{\sum_{k=1}^Nr_k}\]
and
\[
\frac{P(\rv{LR_D}=\r \mid \rv{R}=i)}{P(\rv{LR_D}=\r \mid \rv{R}\neq i)}=\frac{r_i}{\frac{1}{N-1}\sum_{k \neq i}r_k}.
\]

\begin{example}
{\rm We can view the process of searching for a match with a DNA profile as a special case. In that case $\rv{R}$ corresponds to the trace donor, and $\rv{S}$ can only take value $e_0$, the DNA profile in question. On the other hand $\rv{G}$ can take more values with probabilities given by the profile population frequencies. Let $p$ denote $P(\rv{G}=e_0)$, the random match probability of the profile $e_0$.
In this situation, $\rv{LR}(\rv{P_i})$ can take value $0$ or $1/p$. The variables $\rv{P_1},\dots,\rv{P_N}$ correspond to the members of a database in which we look for the profile $e_0$.  Suppose that the $i^{\rm th}$ database member is the only one that matches. Then $\rv{LR}(\rv{P_i})=1/p$ and all other $\rv{LR}(\rv{P_k})=0$ (for $1 \leq k \leq N, k \neq i$), so the likelihood ratio \eqref{LRPi} in favour of $\rv{R}=i$ becomes
\[ \frac{P(\rv{LR_D}=\r \mid \rv{R}=i)}{P(\rv{LR_D}=\r \mid \rv{R}\neq i)} =\frac{1}{p}\frac{N-\pi_{\D}}{N(1-\pi_{\D})}.\]
If $\pi_{\D}=N/n$ (the population fraction in the database), then this reduces to $(n-1)/(p(n-N))$, and
the likelihood ratio in favour of $\rv{R} \in \D$ is given by (cf.\ \eqref{LRD})
\[ \frac{P(\rv{R}=i)}{pP(\rv{R} \in \rv{D})}=\frac{1}{Np}.\]
These results are well known; see e.g.\ \cite{eiland} and the references therein.}
\end{example}

\section{Search strategies}\label{strat} We will now use these results to define strategies to choose a subset of $\D$ as small as possible, and which contains $\rv{R}$ with a given minimal probability $\alpha$.

\subsection{The conditional method} Let $\D^k$ be the subset of $\D$ that corresponds to the $k$ largest products $r_i\pi_i$ (with some arbitrary rule in case of ties). 
Furthermore, we let, for $ 0 \leq \alpha \leq 1$, $k_{\alpha}$ be the minimal $k$ for which
$$
\sum_{j \in \D^k} r_j\pi_j\geq \alpha (r_1\pi_1 + \cdots + r_N\pi_N),
$$
that is, $k_{\alpha}$ is the smallest $k$ for which the corresponding sum of the likelihood ratios weighted with the prior probabilities is at least a fraction $\alpha$ of the total weighted sum. Finally, we write $\D^{\alpha}$ for  $\D^{k_{\alpha}}$. Note that in order to determine whether or not $i \in \D^{\alpha}$, one needs the full vector $\r$.
Note also that $P(\rv{R} \in \D^\alpha)$ depends on the distribution of $\rv{S}$ and $\rv{G}$, but $P(\rv{R} \in \D^{\alpha} \mid
\rv{LR_D} =\r)$ does not (cf.\ Remark \ref{opmerking}). The distribution of the cardinality of $\D^{\alpha}$ also depends on the distribution of $\rv{S}$ and $\rv{G}$ (and on $N$).

We now make two observations about the probability that the index $\rv{R}$ is contained in $\D^{\alpha}$. First, 
in case $P(\rv{R} \in \D)=1$ it follows from (\ref{nummereen}) that for the unconditional probability $P(\rv{R} \in \D^{\alpha})$ we have
\begin{equation}
\label{groteralpha}
P(\rv{R} \in \D^{\alpha}) \geq \alpha.
\end{equation} 
Secondly, if we do not have $P(\rv{R} \in \D)=1$, then from (\ref{nummertwee}) we have (for $i=1,\ldots, N$) that
\begin{eqnarray*}
P(\rv{R}=i \mid \rv{R} \in \D ) &=& \sum_{\r} P( \rv{LR_D}=\r\mid\rv{R} \in \D)P(\rv{R}=i\mid\rv{R} \in \D, \rv{LR_D}=\r )\\
&=&\sum_{\r}P(\rv{LR_D}=r\mid \rv{R} \in \D ) \frac{r_i\pi_i}{\sum_{k=1}^N r_k \pi_k}.
\end{eqnarray*}
Hence
\begin{eqnarray*}
P(\rv{R} \in \D^\alpha \mid \rv{R} \in \D)&=&\sum_{\r}P(\rv{LR_D}=\r\mid\rv{R} \in \D) P(\rv{R} \in \D^\alpha \mid \rv{R} \in \D, \rv{LR_D}=\r)\\
&=& \sum_{\r} P(\rv{LR_D}=\r\mid\rv{R} \in \D ) \sum_{i \in \D^\alpha} \frac{r_i\pi_i}{\sum_{k=1}^N r_k \pi_k}\\
&\geq & \alpha \sum_{\r} P(\rv{LR_D}=\r\mid\rv{R} \in \D ) =\alpha,
\end{eqnarray*}
where the inequality follows from the definition of $\D^{\alpha}$.
The quantity
$$
P(\rv{R} \in \D^\alpha | \rv{R} \in \D)
$$
is called the {\em efficiency} of $\D^\alpha$, the ability to select $\rv{R}$ given that $\rv{R}$ is in the database.  We just showed that the efficiency of $\D^\alpha$ is at least $\alpha$.

\subsection{The target-centered method} Recall that we have described $\rv{S}$ as the random variable that selects DNA profiles of relatives, and $\rv{G}$ as the random variable that selects DNA profiles of unrelated individuals. These being random variables, the likelihood ratios in favour of relatedness for related individuals (drawn from $\rv{S}$) and for unrelated individuals (drawn from $\rv{G}$) also become random variables. By $P(\rv{LR}(\rv{S})\geq t)$ we mean the probability that the likelihood ratio for a related individual is at least $t$. For example, when looking for a sibling, this would correspond to the probability that the sibling index with a random sibling of the offender is at least $t$. Similarly, it also makes sense to write $\rv{LR}(\rv{P_i})$: this is the likelihood ratio obtained from population member $i$.

For $0 \leq \alpha \leq 1$, let $t_\alpha \geq 0$ be the largest $t$ for which
\begin{equation}
\label{t}
P(\rv{LR}(\rv{S}) \geq t)\geq \alpha.
\end{equation}

We use these thresholds $t_\alpha$ to define
\begin{equation}
\label{dalpha}
\D_\alpha=\{ i \in \D \mid \rv{LR}(\rv{P_i}) \geq t_\alpha\}.
\end{equation}
In order to decide whether or not $i \in \D_{\alpha}$, one only needs to know $r_i$, and not the full vector $\r$ as in the case of
$\D^{\alpha}$.
It follows that 
\[ P(\rv{R} \in \D_\alpha \mid \rv{R} \in \rv{D})=P(\rv{LR}(\rv{S})\geq t_\alpha) \geq \alpha,\]
so also the efficiency of $\D_\alpha$ is at least $\alpha$. In fact, for every $0 \leq \alpha \leq 1$, 
\begin{equation}
\label{post2}
 P(\rv{R} \in \D_\alpha)\geq \alpha P(\rv{R}\in \D).
\end{equation}

Thus, we simply choose a threshold that is met by the real relative with probability $\alpha$, and admit anyone into $\D_\alpha$ when the likelihood ratio is sufficiently big. 

\begin{rem} {\rm Conversely, one could also select all database members whose likelihood ratio is unusually big if they would be unrelated.
This criterion has been proposed in \cite{sjerpskloos}, in the context of deciding whether or not to further investigate the possibility that the suspect's relative matches a crime stain. In our terminology, they propose a threshold $s_\beta$ such that
\[ \beta=P(\rv{LR(G)} \geq s_\beta).\]
If $N$ is large then (since all but at most one of the database members are distributed as $\rv{G}$) one expects a fraction $\beta$ of the database to be selected into $\{ i \in \D \mid \rv{LR(P_i)} \geq s_\beta\}$. The relation with $\D_\alpha$ is as follows. We have 

\begin{eqnarray*} \alpha &\leq & P(\rv{LR(S)}\geq t_\alpha) \\ &=& \sum_{x \geq t_\alpha} P(\rv{LR(S)}=x) \\ &=& \sum_{x \geq t_\alpha} x P(\rv{LR(G)}=x) \\  &=& \sum_{x \geq 0} xP(\rv{LR(G)}=x \mid \rv{LR(G)} \geq t_\alpha)P(\rv{LR(G)}\geq t_\alpha) \\ &=& E(\rv{LR(G)} \mid \rv{LR(G)} \geq t_\alpha )P(\rv{LR(G)} \geq t_\alpha).\end{eqnarray*}
Now, let $\beta$ be such that $t_\alpha=s_\beta$, then
\[ \alpha \leq \beta\cdot E(\rv{LR(G)} \mid \rv{LR(G)} \geq s_\beta).\]
Clearly, $\alpha$ cannot be expressed in $\beta$ alone but depends also on the target that we are dealing with. It follows that when selecting a database subset according to the threshold $s_\beta$, the probability that $\rv{R}$ is selected depends on the specific aspects of the case, whereas for $\D_\alpha$ (and $\D^\alpha$) it has a uniform lower bound $\alpha$.}
\end{rem}

\begin{rem} {\rm Notice that this approach essentially compares false-negative probabilities ($1-\alpha$) with false-positive probabilities ($P(\rv{LR(G)}\geq t_\alpha)$). This is also done in \cite{ge}, but without focussing on a specific target profile. The results are therefore informative about the average performance with likelihood ratio thresholds, not about results in a particular case. Moreover, various studies (cf. \cite{ge},\cite{CowenThomson}) employ thresholds that are a combination of a threshold on the number of shared alleles (denoted IBS, identity by state) and a likelihood ratio threshold $t$. Suppose we define $\D^{n,t}$ as those database profiles that share at least $n$ alleles with the target profile and for which the kinship index with the target profile is at least $t$. For a given target, $\D^{n,t}$ will contain a true relative with a certain probability, and will contain unrelated individuals with another probability. However, the Neyman-Pearson lemma implies that of all tests with the same false-negative rate, the likelihood ratio test has the smallest false-positive rate. In other words, the pair $(n,t)$ will correspond to a false-negative rate $1-\alpha$, and then $\D^\alpha$ which has this same false-negative rate, will have a better (more precisely, not a worse) false-positive rate than $\D^{n,t}$. Therefore, even though it can be computationally attractive to work with a threshold $(n,t)$, conceptually these thresholds are outperformed by putting a threshold on the likelihood ratio alone.}
\end{rem}

\subsection{Comparison and interpretation}\label{comp}
We have defined two subsets $\D^\alpha$ and $\D_\alpha$, both with efficiency at least $\alpha$.
Nevertheless, there are important differences between these approaches that we wish to discuss here. 

First of all, $\D^\alpha$ makes use of the prior probabilities $\pi_i=P(\rv{R}=i)$, while $\D_\alpha$ does not. For example, in case of familial searching, geographical information or age could play a role in the definition of prior probabilities $P(\rv{R}=i)$. Thus, $\D^\alpha$ uses more information than $\D_\alpha$, which seems to give $\D^\alpha$ an advantage over $\D_\alpha$. 

There is, however, a reason why the use of $\D_\alpha$ could be more appropriate in concrete cases. This reason has to do with the interpretation of the probabilities involved, and we explain this next. We can see $\D^\alpha$ as a random subset of $\D$ which contains all database members that have yielded likelihood ratios greater than or equal to a {\em random} threshold. The distribution of this threshold depends on the distributions of both $\rv{S}$ and  $\rv{G}$ (and on $N$, the size of the database). Therefore, a frequentist interpretation requires re-sampling of the database. Indeed, we have defined a subset in such a way that, if we would construct it for many realizations of one copy of $\rv{S}$ among $N-1$ copies of $\rv{G}$, a fraction $P(\rv{R}\in \D^\alpha \mid \rv{R} \in \rv{D})$ of the time we would have included the copy of $\rv{S}$.

The interpretation of the probability $P(\rv{R} \in \D_\alpha)$, on the other hand, is easier. Indeed, $\D_\alpha$ is a random subset of $\D$ as well, containing all database members that have yielded likelihood ratios above some threshold, but this time the threshold depends on the distribution of $\rv{S}$ only (through $\rv{LR(S)}$). This allows us to make another frequentist interpretation: we choose a realisation of the database (according to $\rv{G}$), and then, keeping the database fixed, repeatedly add one copy of a realisation of $\rv{S}$. We can think of $P(\rv{R} \in \D_\alpha| \rv{R} \in \D)$ as the relative frequency of times we would find the special member in $\D_\alpha$. This interpretation corresponds well with what one would intuitively understand by the probability of finding the relative since in the forensic practice, the database is (more or less) fixed. From this point of view it is more appropriate to use $\D_\alpha$ rather than $\D^\alpha$ and, importantly, it is also easier to explain to legal representatives what the probabilistic statement really means. 

The frequentist considerations above apply to the general framework we have discussed in this paper. In the special case of familial searching however, the drawback of using $\D^\alpha$ may not be that serious, for the following reason. We explained that for a full frequentist interpretation of $\D^\alpha$, one would need to resample the database many times, and that this does not correspond well to legal practice. However, what matters is not so much that we can interpret the full profiles in the database as being resampled, but that we can interpret the 
{\em observed likelihood ratios} as being resampled. Suppose that we treat various different familial searching cases (i.e., try to find relatives from various targets) with the same database. Then, when we compute likelihood ratios between the database and a new target, these likelihood ratios depend on the newly sampled target profile, and it is to be expected that for an independent sequence of target profiles, the observed likelihood ratios corresponding to the fixed profiles in the database are more or less independent. To test this, in the next section we investigate using computer simulation to what extent the frequentist interpretation that we have for $\D_\alpha$ is valid for $\D^\alpha$ as well. That is, we draw many targets independently according to $\rv{G}$ (i.e., at random using population allele frequencies), and add their simulated relatives to the same database. We see how many of these relatives are found in $\D^\alpha$, on average over all targets. This we will compare to adding many relatives of the {\em same} target to resampled databases.

Finally, we mention the fact that when the database is large and uniform priors are used, the sets $\D^\alpha$ and $\D_\alpha$ will be very similar. This is due to the fact that the law of large numbers implies that the sum of the likelihoods above $t_\alpha$ divided by $N$ will be close to $\alpha$. Hence the random threshold associated with $\D^\alpha$ (discussed above) will with very high probability be very close to $t_\alpha$. This argument can be made precise in the form of a limit statement in probability or almost surely.  

\subsection{Heterogeneous databases}
A forensic DNA database may consist of profiles that have different sets of loci typed. As a result, not all profiles stored in the database are equally informative. Mathematically this means that instead of one pair $(\rv{S},\rv{G})$, we have several such pairs. We now sketch how the search strategies deal with such a situation. For ease of exposition, we consider only the situation where there are two such pairs, the generalization to a larger number being obvious. In the DNA context, this corresponds to a database that contains DNA profiles for two different sets of loci. We write the database $\D$ as a disjoint union $\D=\D_1\cup\D_2$, where $\D_i$ corresponds to the hypothesis random variables $\rv{S}_i$ and $\rv{G}_i$. 

\subsubsection{Conditional method} 
Going through the proof of Proposition \ref{postpi}, one checks that this expression also holds in this heterogeneous situation, with the understanding that if $\rv{P_i} \in \D_j$ then $\rv{P_i}$ is distributed either as $\rv{S_j}$ (if $\rv{R}=\rv{P_i}$) or as $\rv{G_j}$ (if $\rv{R} \neq \rv{P_i}$). The search strategy therefore need not be modified but its efficiency may change. For example, as the amount of genetic information in $\D_1$ increases, $\rv{S_1}$ approaches a point distribution at infinity. In the limit, if $\rv{R} \in \D_1$ then $\rv{R} \in \D^\alpha$ for all $\alpha>0$ and hence the efficiency of $\D^\alpha$ is at least $P(\rv{R} \in \D_1 \mid \rv{R} \in \D)$. Thus, the heterogeneity of $\D$ has an effect on the efficiency and cardinality of $\D^\alpha$.

Another possibility is to perform the conditional method on each database part separately with its own efficiency parameter $\alpha_i$; we may denote the resulting subset of $\D$ by $\D^{\alpha_1,\alpha_2}=\D_1^{\alpha_1} \cup \D_2^{\alpha_2}$. In that case,
\[ P(\rv{R} \in \D^{\alpha_1,\alpha_2} \mid \rv{R} \in \D) \geq \alpha_1 P(\rv{R} \in \D_1 \mid \rv{R} \in \D)+\alpha_2 P(\rv{R} \in \D_2 \mid \rv{R} \in \D).\]
In particular, choosing $\alpha_1=\alpha_2$ leads to an efficiency of (at least) $\alpha$ for $\D^{\alpha,\alpha}$. Other choices of $\alpha_1,\alpha_2$ may lead to the same efficiency. 
If, for example, $\D_1$ contains more genetic information than $\D_2$, then one may take advantage of this fact by letting $\alpha_1> \alpha_2$. This means that the efficiency is greater in $\D_1$ than it is in $\D_2$, while the overall efficiency of $\D^{\alpha_1,\alpha_2}$ is equal to $\alpha=(\alpha_1-\alpha_2)P(\rv{R} \in \D_1 \mid \rv{R} \in \D)+\alpha_2$. For these choices $\D^{\alpha,\alpha}$ and $\D^{\alpha_1,\alpha_2}$ have the same efficiency, but the expected cardinality of $\D^{\alpha_1,\alpha_2}$ can be smaller than that of $\D^{\alpha,\alpha}$.

\subsubsection{Target-centered method} In this case as well, we can make use of the database's heterogeneity to define various search strategies with the same efficiency. Let, as in \eqref{t}, $t_{i,\alpha} \geq 0$ be the largest $t$ for which $P(\rv{LR}(\rv{S_i}) \geq t)\geq \alpha$ (where $\rv{LR}=\rv{LR}_{\rv{S_i},\rv{G_i}})$, i.e., we define the appropriate thresholds on the likelihood ratio for every type of database entry. Then we let
\[ \D_{\alpha_1,\alpha_2}=\{i \in \D_1 \mid \rv{LR}(\rv{P_i}) \geq t_{1,\alpha}\} \cup \{i \in \D_2 \mid \rv{LR}(\rv{P_i}) \geq t_{2,\alpha}\},\]
containing a database member if, considering the amount of information stored for this database member, the obtained likelihood ratio is sufficiently big.
Then the efficiency of this strategy is simply, as it was for $\D^{\alpha_1,\alpha_2}$,
\[ P(\rv{R} \in \D_{\alpha_1,\alpha_2} \mid \rv{R} \in \D) \geq \alpha_1P(\rv{R} \in \D_1 \mid \rv{R} \in \D)+\alpha_2P(\rv{R} \in \D_2 \mid \rv{R} \in \D).\]
 Using the same limit as for the conditional method, as $\D_1$ contains more information, an efficiency of $\alpha_1=1$ can be obtained in the limit using $t_{1,1}=\infty$, in which case the efficiency of $\D_{\alpha_1,\alpha_2}$ will be at least $P(\rv{R} \in \D_1 \mid \rv{R} \in \D)$, the same as for the conditional method.

\section{Familial Searching in the Dutch National {\textsc dna} Database}
\label{secFS}
\subsection{Methods and notation}\label{methods}
We have carried out simulation experiments using the Dutch DNA database, where we have carried out familial searches with artificial targets, looking for parent-child, sibling, and half-sibling relations. We will restrict ourselves mostly to the results of the sibling and half-sibling searches here, parents and children being relatively easy to find owing to the fact that they always share an allele on each locus (barring mutations, but these are rare) whereas siblings (and of course half siblings too) need not share any alleles with each other.

All our simulations were programmed in-house with Mathematica software. We let $\D_{NL}$ be the Dutch National DNA Database (as per mid 2010, all duplicate profiles removed and only considering the $N =99,979$ profiles for which all ten SGMPlus loci were typed). Allelic ladders and allele frequencies were taken from $\D_{NL}$.

According to these allele frequencies, target profiles $C_1, \dots, C_{100}$ were sampled (pseudo)randomly to serve as the targets whose relatives we want to find using familial searching. For each of these target profiles we sampled 50,000 children and 50,000 siblings. Then we computed the likelihood ratios in favour of paternity (the {\em Paternity Index} $PI$) between the $C_i$ and their children, and those in favour of siblingship (the {\em Sibling Index} $SI$) between the $C_i$ and their siblings.  These allow us to estimate the thresholds $t_\alpha$ (cf.\ \eqref{t}) for the paternity and sibling cases. 

We will sometimes write $KI$, for {\em Kinship Index}, when we mean that the discussion holds for any type of relative, in particular $KI$ can stand for $PI$ or $SI$.

The DNA profiles in $\D_{NL}$ are labeled $d_1,\dots,d_N$; they can be viewed as a sample of independent copies of $\rv{G}$ that is fixed throughout. By $KI(C_i,d_j)$ we mean the kinship index between the target profile $C_i$ and database profile $d_j$. Thus, $KI(C_i,d_j)$ can, for each target separately, be interpreted as a realization $r_j$ of the random variables $\rv{LR(G)}$ in the preceding sections. 

We have also computed the random match probability (RMP) of each target profile. On a locus with alleles $(a,b)$, the RMP is equal to $p_ap_b(2-\delta_{a,b})$ where $p_i$ is the allele frequency of allele $i$ and $\delta_{a,b}=0$ if $a\neq b$ and $\delta_{a,a}=1$. The RMP of a DNA profile is then the product over all involved loci, since we assume all loci to be independent. This is reasonable, since all loci are on different chromosomes.

\subsection{Total likelihood ratio with the database} 
For all 100 targets $C_i$, we computed the sums $|KI(C_i,\D_{NL})|=\sum_{k=1}^NKI(C_i,d_k)$. The mean $|PI(C_i,\D_{NL})|$ was 102,200 (with sample standard deviation 94,500), the mean $|SI(C_i,\D_{NL})|$ was 93,500 (with sample standard deviation 42,200). 
These results seem consistent with what we expect, cf. Proposition \ref{e1}. Indeed, since all target profiles were randomly generated, they do not have a true relative in the database, and hence $E(|KI(C_i,\D_{NL})|)=N$ according to Proposition \ref{e1}.

\subsection{Conditional method}
First, we have compared the results of the conditional method when each familial search is performed in the same database (i.e., the Dutch National DNA database), with results obtained when each familial search is carried out in a new database. 
\subsubsection{The conditional method in the Dutch National DNA Database}
\label{effnatdb}
We have investigated (by simulation) what the probability is that a relative of a fixed target is found in $\tilde{\D}_{NL}^\alpha$, where $\tilde{\D}_{NL}$ is the extension of $\D_{NL}$ with a relative of the considered target. We take a uniform prior 
distribution of $\rv{R}$ on $\tilde{\D}_{NL}$. 

To do so, we have simulated relatives $R_{i,j}$ ($i=1,\dots,100; j=1,\dots,500$) of each type (children and siblings), where $R_{i,j}$ is a relative of target profile $C_i$. 
For each relative $R_{i,j}$, define its {\em rank} to be equal to $k$ if and only if there are exactly $k-1$ database members that have a greater kinship index with $C_i$ than $R_{i,j}$. 
We also define  
\[ t_{i,j}=\frac{\sum_{x: KI(C_i,d_x)>KI(C_i,R_{i,j})}KI(C_i,d_x)}{KI(C_i,d_1)+\dots+KI(C_i,d_N)+KI(C_i,R_{i,j})}.\]
Assuming a uniform prior of $\rv{R}$ on $\tilde{\D}_{NL}$, $t_{i,j}$ is the greatest $t\geq 0$ such that $R_{i,j} \notin \tilde{D}^{t}_{NL}$. Thus, $R_{i,j} \in \tilde{\D}^\alpha_{NL}$ if and only if $\alpha > t_{i,j}$. 

For each $C_i$ and for $\alpha \in \{0.01,\dots,0.99,1\}$, we have compared $\alpha$ to the fraction $\beta_{i,\alpha}$ of $t_{i,j}$ that are smaller than $\alpha$; this fraction $\beta_{i,\alpha}$ is the observed probability for relatives of $C_i$ to be in $\tilde{\D}^\alpha_{NL}$.
Finally, we have also computed $\beta_\alpha$ as the average over all $\beta_{i,\alpha}$. Thus, $\beta_\alpha$ estimates the probability that if one adds a relative $\rv{R}$ of a random target profile to {\it this} database $\D_{NL}$, that $\rv{R}$ is in $\tilde{\D}^\alpha_{NL}$. 

The probability that the relative of a target $C$ is in $\tilde{\D}^\alpha_{NL}$ is called the {\em probability of detection} (POD) for $C$ in the Dutch National Database $\D_{NL}$. The number $\beta_\alpha$ therefore gives an estimate of the average (over all targets) probabilities of detection POD. Note that a POD is only defined in connection to a fixed database, in this case $\D_{NL}$. 

For siblings, the result of our simulations is displayed in Figure \ref{SI-alpha}.
\begin{figure}[h!]
\caption{The average POD as a function 
of $\alpha$, average over 100 target profiles, Sibling Index.}
\label{SI-alpha}
\includegraphics[width=10cm]{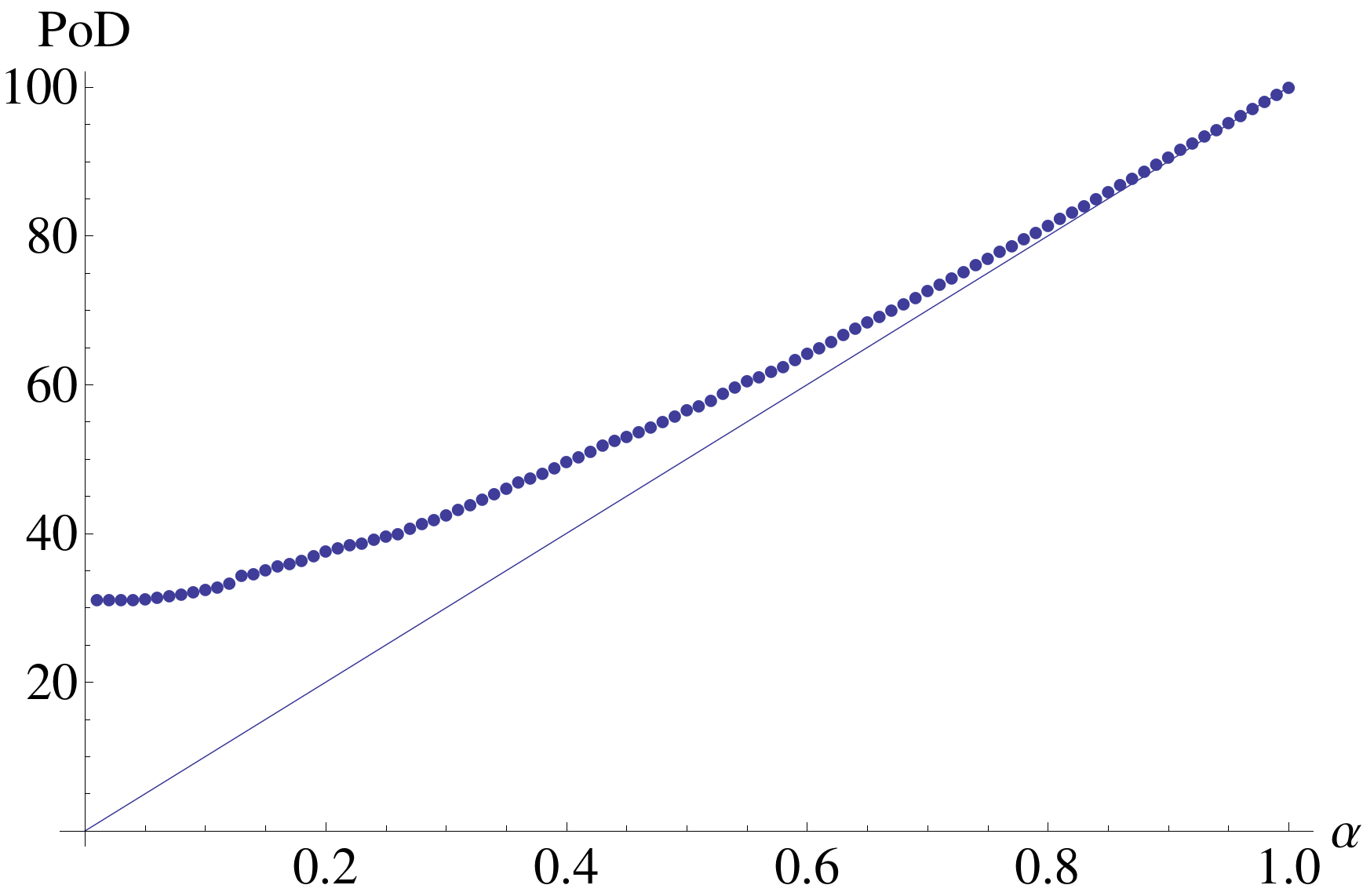}
\end{figure}

For all $\alpha$ the average POD of $\tilde{\D}^\alpha_{NL}$ is at least $\alpha$. For small $\alpha$, it exceeds $\alpha$ substantially and as $\alpha$ increases, the average POD of $\tilde{\D}^\alpha_{NL}$ approaches $\alpha$. This is a consequence of the definition of $\tilde{\D}^\alpha_{NL}$ as being a subset that contains the $k$ greatest $PI$ for some $k$. As $\alpha$ increases and $\tilde{\D}^\alpha_{NL}$ becomes larger, we add individuals with smaller $SI$, and we expect $\beta_\alpha$ to become closer to $\alpha$. However, the variation between target profiles was substantial. We highlight three very different results in Figure \ref{PI-alpha-3prof}.
\begin{figure}[h!]
\caption{The POD as a function of $\alpha$, for three target profiles.}
\label{PI-alpha-3prof}
\includegraphics[width=4cm]{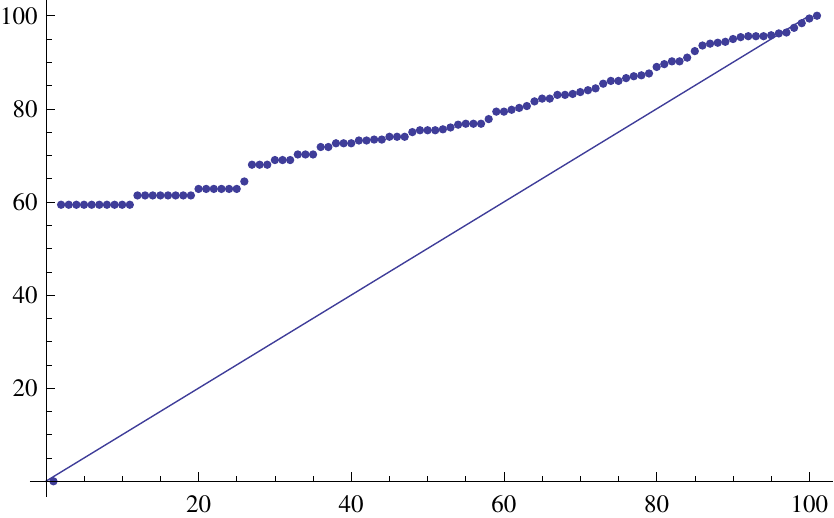}
\includegraphics[width=4cm]{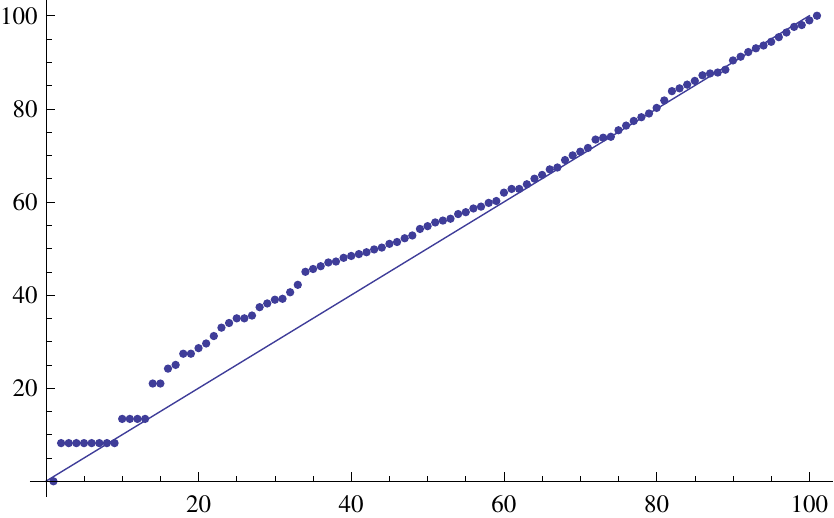}
\includegraphics[width=4cm]{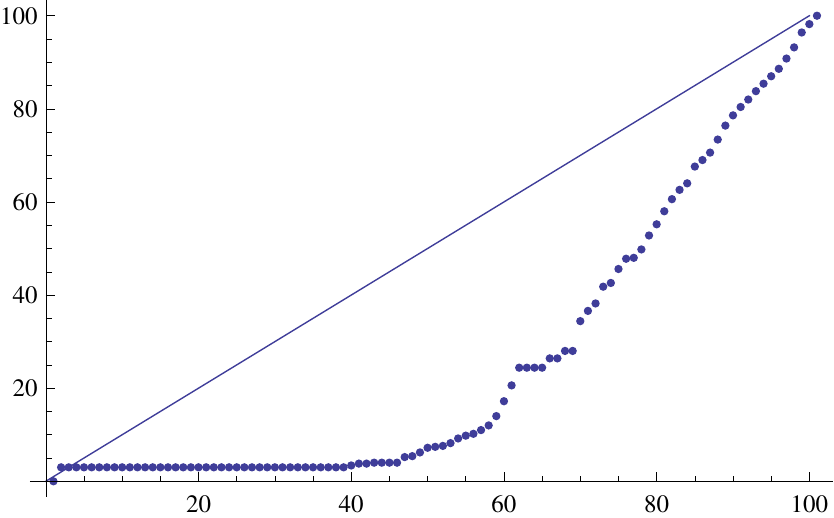}
\end{figure}
This difference is created by the presence (or absence) of database members that have a large sibling index with the target, thus obscuring (or revealing more easily) the real sibling, especially for low probabilities of detection.

\subsubsection{The conditional method in resampled databases}\label{condresamp}

In this section we compare the above estimates of the probabilities of detection with estimates of the efficiency of the conditional method in resampled databases of roughly the same size as $\D_{NL}$. To do so we have, for each target profile $C_i$ as above, simulated 100 relatives $R'_{i,j}$ and databases $\D_{i,j}$ with $N=100,000$. As in the previous section, we have determined $t_{i,j}$ as the largest $\alpha$ such that 
$R'_{i,j} \notin \tilde{\D}^\alpha_{i,j}$, and used these numbers to determine $\beta'_{i,\alpha}$ and $\beta'_\alpha$ whose definitions are analogous to their earlier counterparts. 

The observed overall efficiency $\beta'_\alpha$ is extremely close to the observed probabilities of detection $\beta_\alpha$ displayed in Figure \ref{SI-alpha}. In fact, the difference between the $\beta'_\alpha$ of this section (the average efficiency) and the $\beta_\alpha$ in Figure \ref{SI-alpha} (average probability of detection) is on average over $\alpha\in\{0.01,\dots,0.99,1\}$  equal to $-0.0022$ and never greater (in absolute value) than 0.0084.

We have also, for each $R_{i,j}$, computed its rank $k_{i,j}$ (defined in Section \ref{effnatdb}). A summary of the results is presented in Figure \ref{AlphaVersusRankAverage}, where we plot for each $\alpha \in \{0, 0.01, \dots, 0.99\}$, the average rank of relatives for which $t_{i,j}$ is nearest to $\alpha$. This gives the average rank of a relative that would be found in $\D^\alpha$, but not in $\D^\beta$ for $\beta < \alpha$.

\begin{figure}[h]
\caption{Observed pairs $(t_{i,j},{\rm Log}_{10}(\mbox{Mean }k_{i,j}))$}
\label{AlphaVersusRankAverage}
\includegraphics[width=10cm]{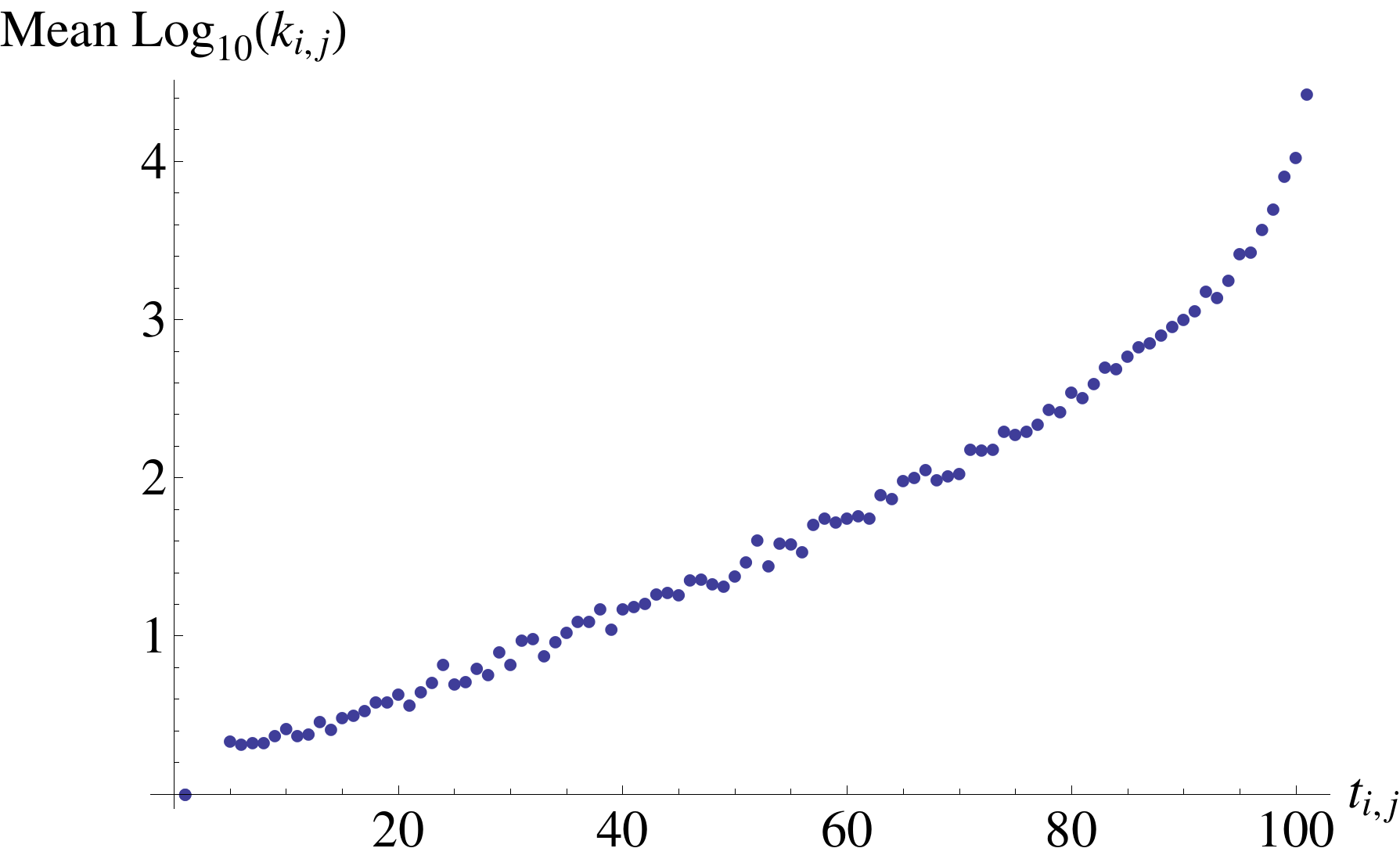}
\end{figure}

\subsubsection{Comparison of simulations}
The simulations indicate that, even though there is a conceptual difference between probability of detection and efficiency, the average probability of detection in the Dutch National DNA Database coincides with the average efficiency of $\D^\alpha$ for databases of the same size. 
In other words, the average probability of detection in $\D_{NL}$, averaged over all targets, is the same as the average efficiency of $\D^\alpha$ (averaged over all targets) for a database $\D$ of the same size as $\D_{NL}$. Therefore, even if we (as we do in practice) do not resample the database but look for relatives in the same database for all targets, then the probability of finding a relative in $\D^\alpha$ when database and relative are resampled is the same as the long term success rate of finding the relative of varying targets in $\D^\alpha$ while the database is kept constant.
On the other hand, for $\D_\alpha$, the interpretation of the efficiency $P(\rv{R} \in \D_\alpha \mid \rv{R} \in \D)$ does not require resampling of the database, hence also holds in a fixed one. This makes the $\D_\alpha$ method easier to interpret.


\subsection{Target-centered method} Recall that we have, for each of the target profiles $C_i$, simulated 50,000 siblings, in order to estimate the profile-dependent thresholds $t_\alpha$ needed for $\D_\alpha$. Some resulting sizes of $\D_{NL,\alpha}$, for $\alpha=0.70, 0.80, 0.90$ are plotted in Figure \ref{SImatches} where each dot represents a target profile. The horizontal axis contains $-{\rm Log}_{10}(RMP)$ with $RMP$ the random match probability (cf. end of \ref{methods}). The mean sizes of $\D_\alpha$ are 85, 258, 1038 respectively, and we notice a tendency for $\D_\alpha$ to be smaller for profiles with a smaller random match probability (i.e. for which $-{\rm Log}_{10}(RMP)$ is larger). This is to be expected: for such profiles $t_\alpha$ will be greater and it will be more unlikely for an unrelated person to have a sibling index with the target exceeding that threshold. The observed mean sizes of $\D_{NL,\alpha}$ are similar to what has been obtained in resampled databases with the conditional method (cf. Figure \ref{AlphaVersusRankAverage}).

\begin{figure}[h]
\caption{Size of $\D_{NL,\alpha}$ for $\alpha$=0.70, 0.80 and 0.90, Sibling Index}
\label{SImatches}
\includegraphics[width=4cm]{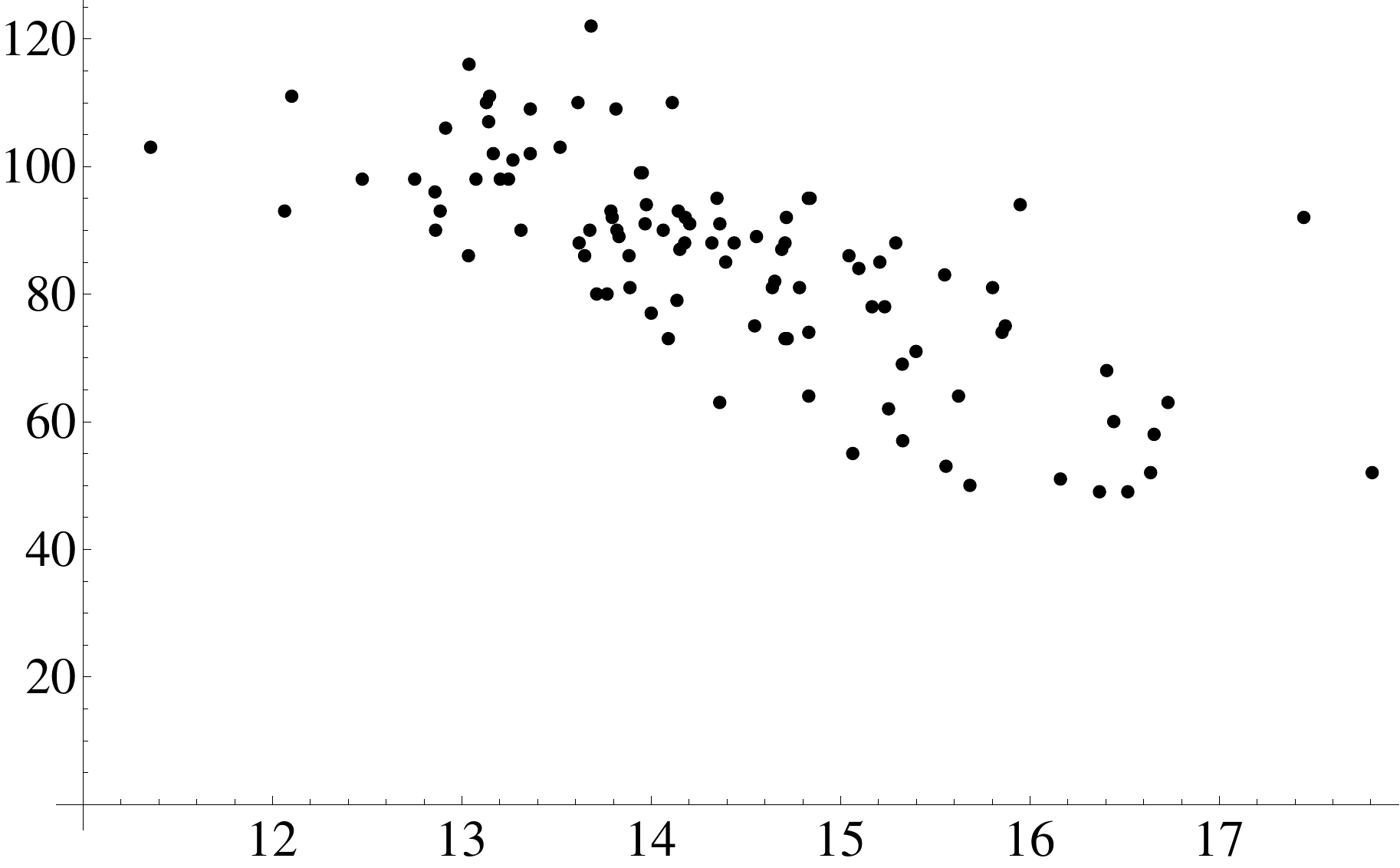}
\includegraphics[width=4cm]{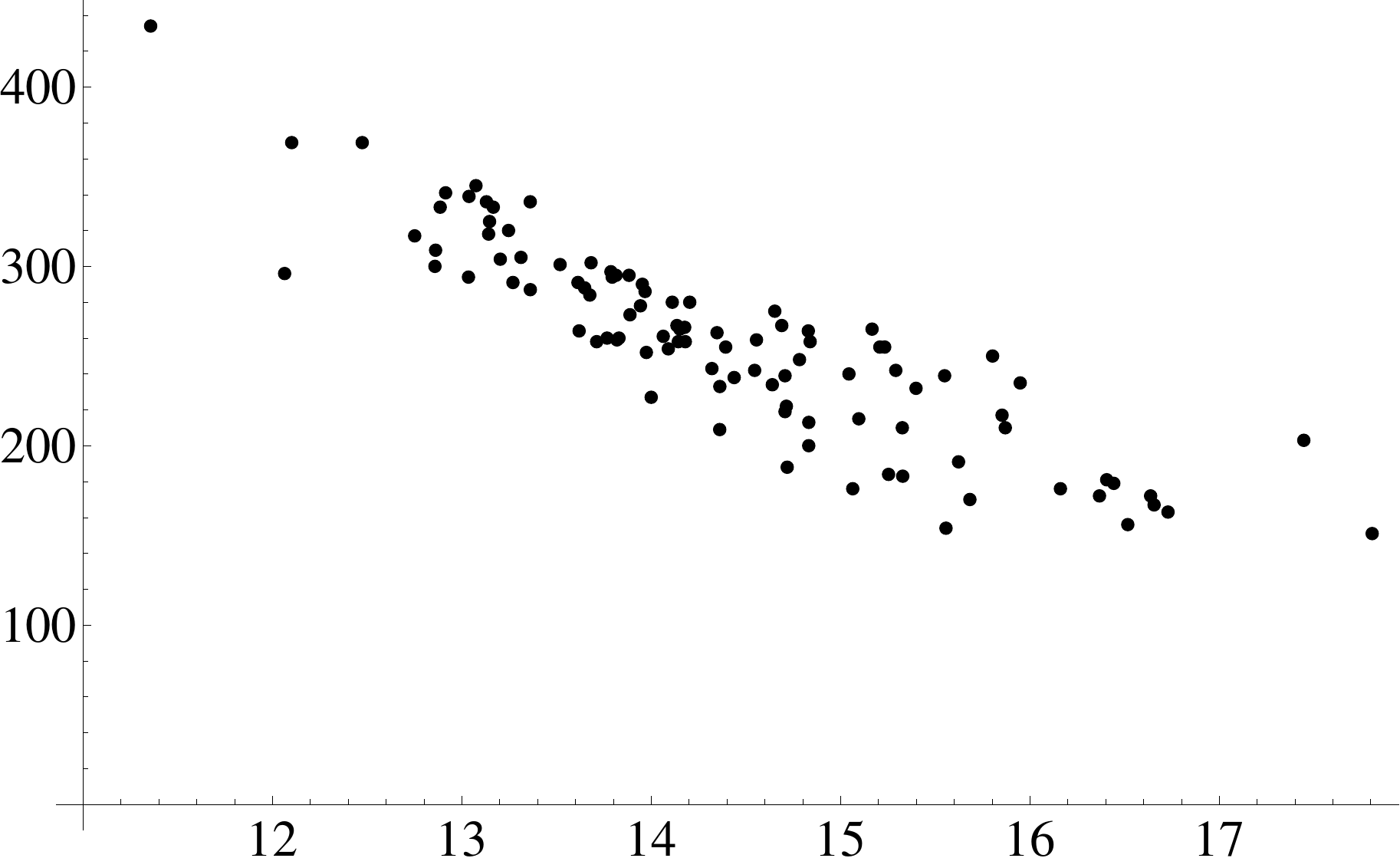}
\includegraphics[width=4cm]{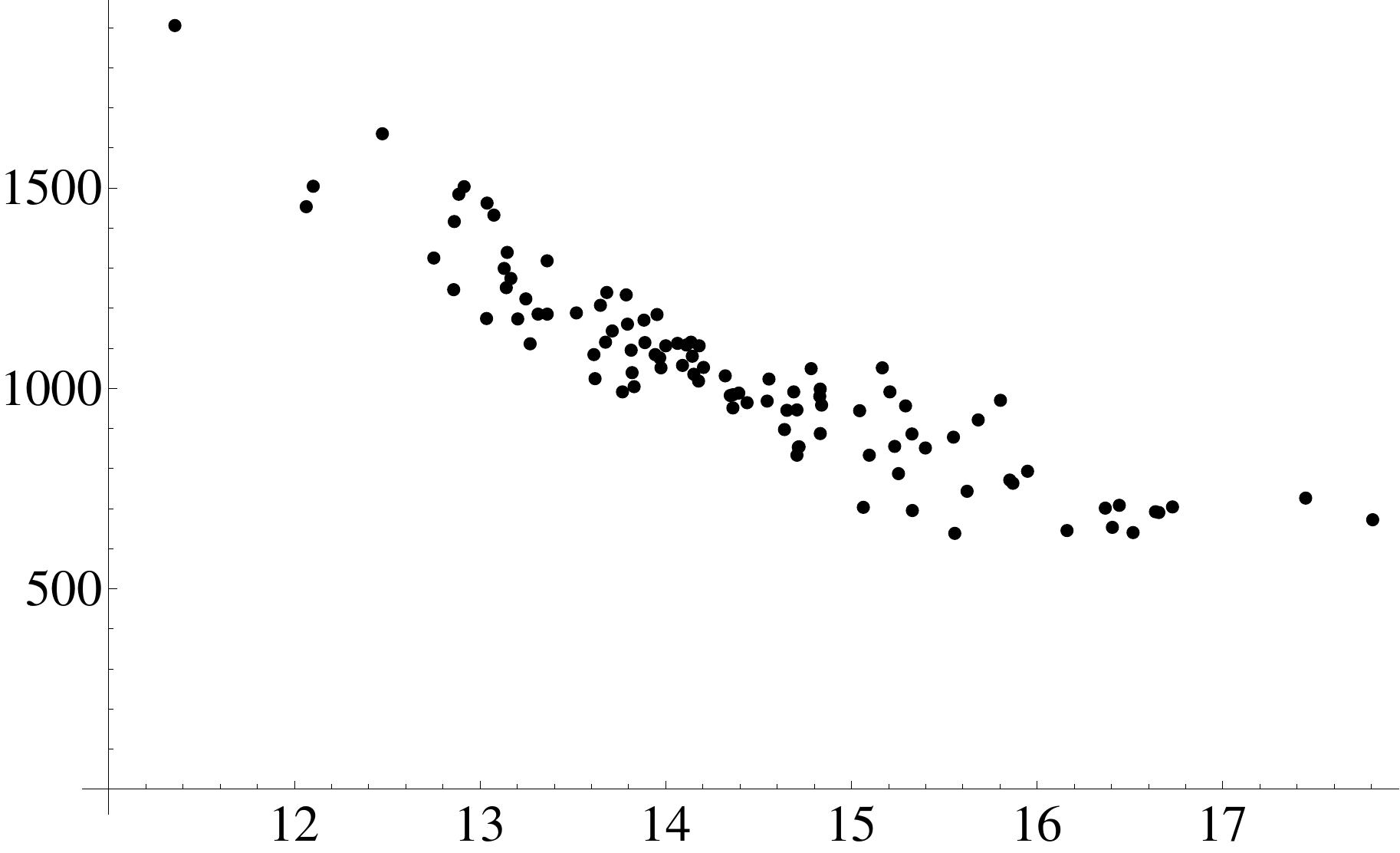}
\end{figure}


\subsection{Rank in Dutch database}
\label{ranking}
Another possibility for a forensic lab is to fix in advance the number of individuals that are additionally tested. Especially when doing so, it is interesting to know how high one expects the relative to rank when the database is sorted according to decreasing kinship index. For each of the target profiles $C_i$, we have generated 1000 siblings and investigated what their rank would be if it would have been added to the Dutch National DNA database. The result, averaged over all targets, is displayed in Figure \ref{AvRank}. This graph should be compared to Figure \ref{AlphaVersusRankAverage}.

\begin{figure}[h]
\caption{Probability for a sibling to obtain SI-rank at most $n$, in Dutch database}
\label{AvRank}
\includegraphics[width=12cm]{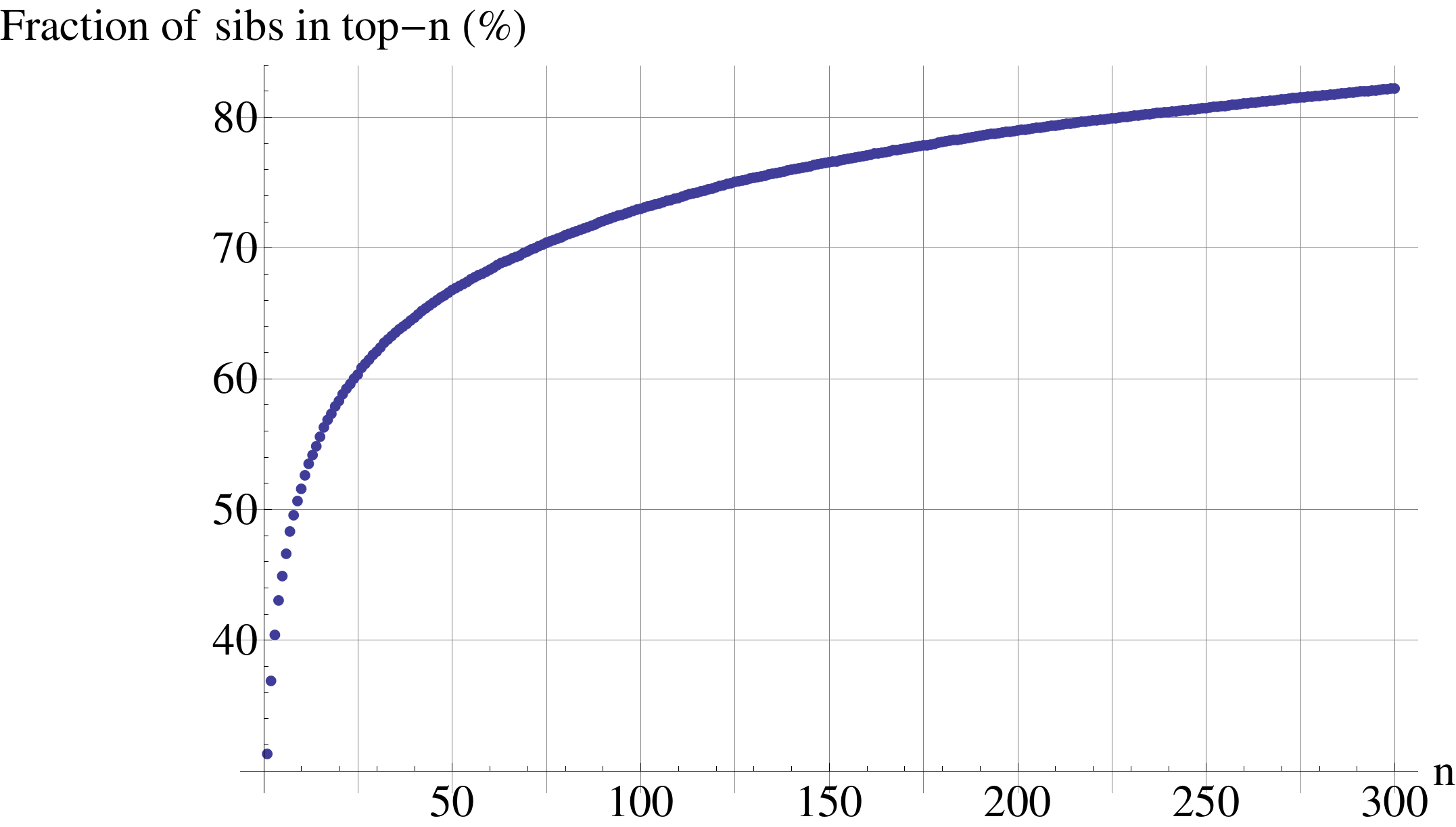}
\end{figure}

In particular, the true sibling was ranked first in 31\% of the cases, in the top 10 in about 52\% of the cases, in the top 100 in about 73\% of the cases, and in the top 200 in about 79\% of the cases. 

Of course, substantial differences were observed between the target profiles. 
For two target profiles with very different results, we have generated 10.000 siblings for each of them. The results are visualized in Figure \ref{ExtremeRank}. These profiles are the same ones that give rise to the first two figures of Figure \ref{PI-alpha-3prof}. The profile corresponding to the upper graph in Figure \ref{ExtremeRank} is, for a SGMPlus profile, rare (its random match probability being $10^{-17.8}$) whereas the other profile is quite common (its random match probability being equal to $10^{-11.4}$).

\begin{figure}[h]
\caption{Probability for a sibling to obtain SI-rank at most $n$, in Dutch database: two extreme cases and the average}
\label{ExtremeRank}
\includegraphics[width=12cm]{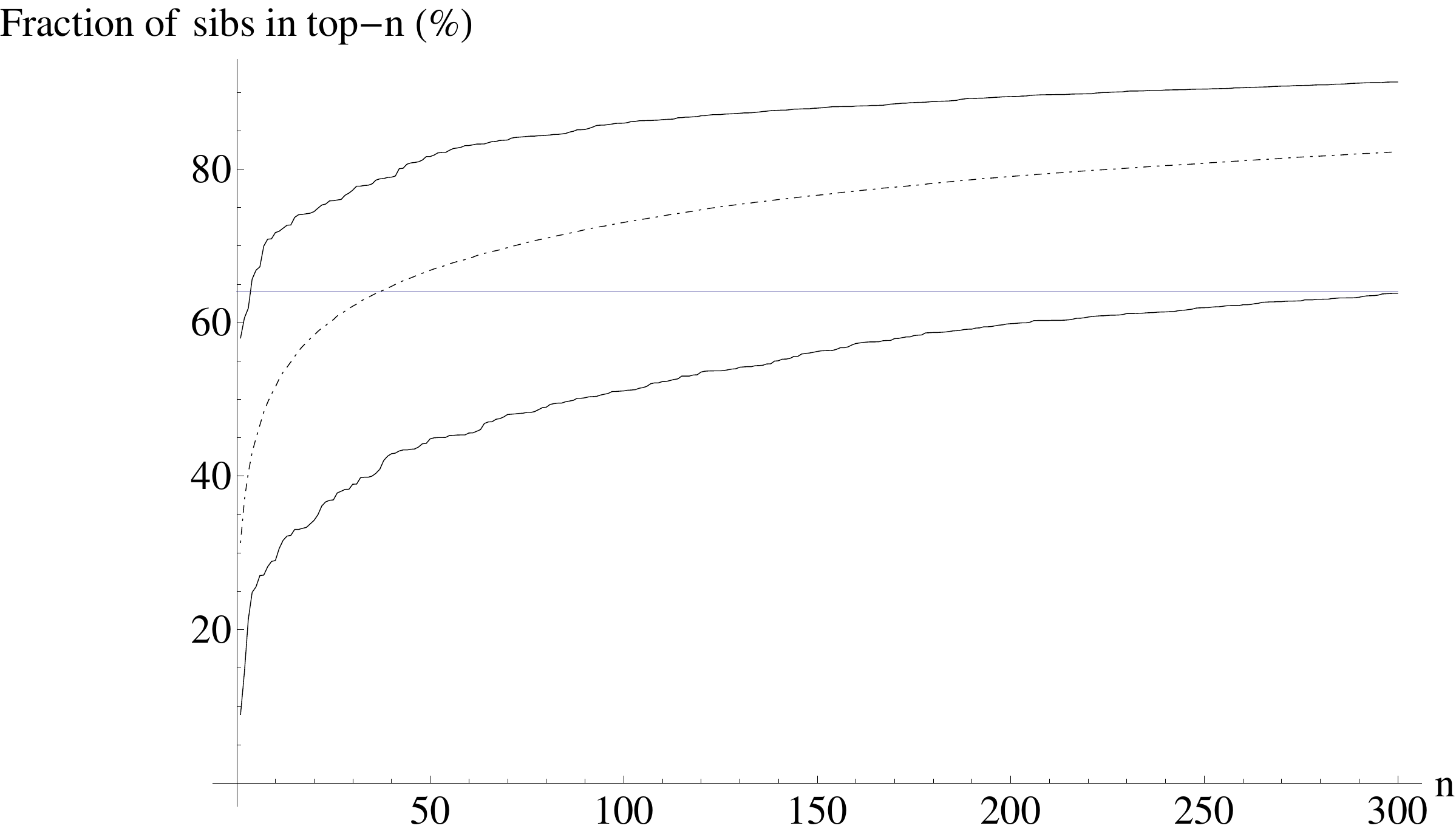}
\end{figure}
Notice that, whereas for the profile corresponding to the upper graph, the probability for a sibling to rank in the top-5 is 67\%, whereas for the profile corresponding to the middle graph the probability to rank in the top-300 is slightly less, 64\%. This illustrates the dramatic differences between different familial searches when it comes to the ease of retrieving a sibling from the database. The middle graph is the average also displayed in Figure \ref{AvRank}.

\subsection{Half-siblings}\label{halfsibs}

Finally we investigate exactly how hard it is to find half-siblings in the Dutch DNA database. It is well known that, using independent autosomal markers as we do here, it is impossible to differentiate between half-siblings, grandparent-grandchild, and uncle-nephew relationships (at least, in the absence of mutation but in practice also when mutation is taken into account with realistic mutation rates). Thus the discussion holds in fact for all these types of relatives, but we will speak of half-siblings only.

Since it is to be expected that half-siblings that genetically look sufficiently like full siblings will be found when a familial search for full siblings is performed, the question is how many half-siblings are found in such a familial search for full siblings. 

We created an artificial database $\D$ with 100.000 SGMPlus profiles (10 loci), and considered 100 artificial target profiles. For each of these, we added 500 times a half-sibling to the database and then sorted to extended database according to decreasing SI or HSI with the target.

The distribution of the ranks that the half-siblings obtained when the database was sorted according to decreasing SI or according to decreasing HSI had the behaviour displayed in Figure \ref{rankHS}.
As the graph indicates, half-siblings are more efficiently found with a HSI-based ranking than with a SI-based ranking, but nevertheless the SI-list does not perform very poorly.
\begin{figure}[h]
\caption{Probability (\%) for a half-sibling to obtain SI-rank and HSI-rank at most $n$, in SGMPlus database, $N=100.000$}
\label{rankHS}
\includegraphics[width=12cm]{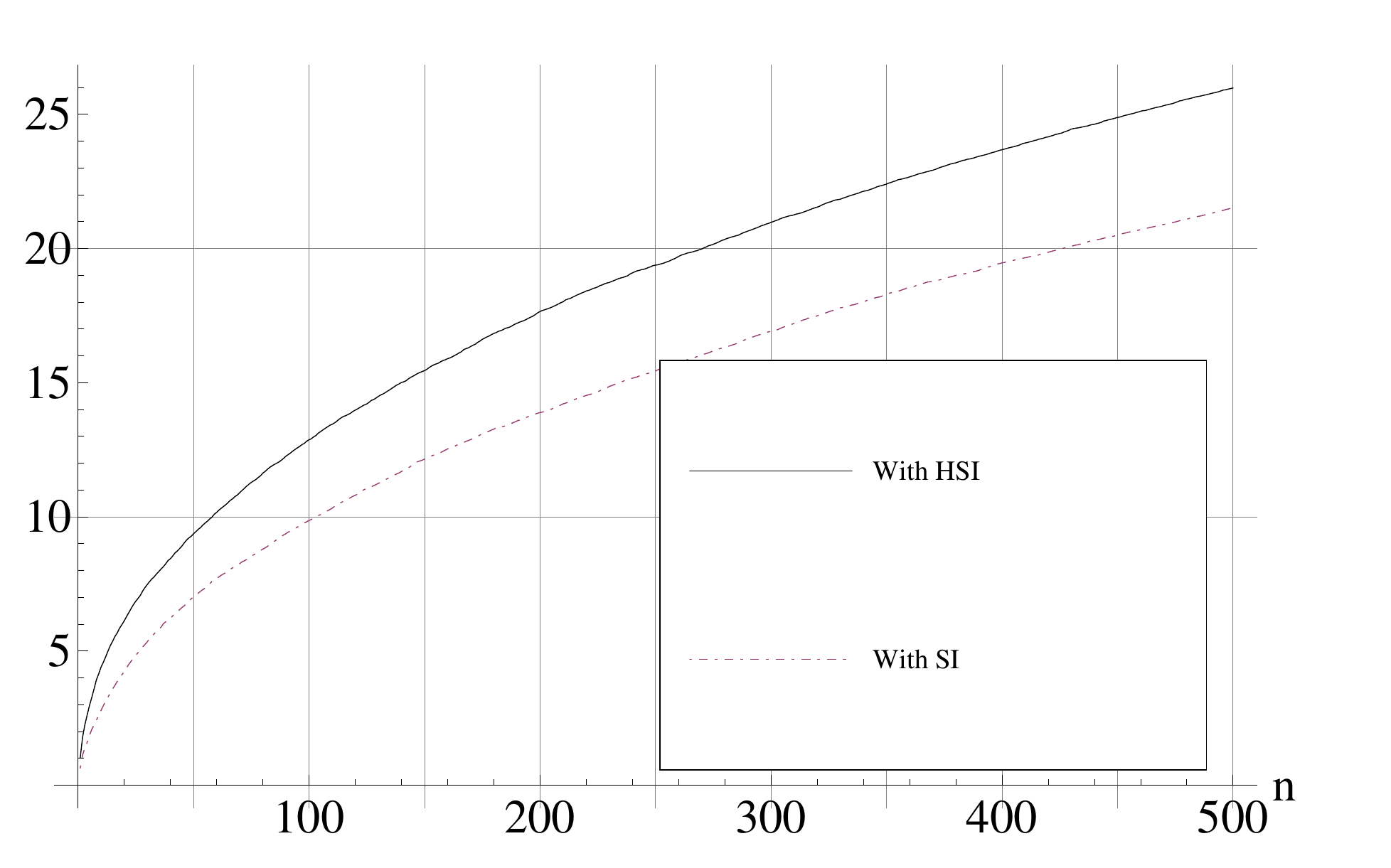}
\end{figure}

We also performed a simulation in which half-siblings were added to a database consisting of 100.000 artificial 15-locus NGM profiles. The result, which is similar, is displayed in Figure \ref{rankHS15} below. We conclude from this that, if the database contains a half-sibling but is only searched for siblings, then there is a quite non-negligible probability that this half-sibling will be found in a sufficiently high position in the SI-ranked list for it to be detectable. When additional DNA tests are done, one should be aware of this possibility; the fact that target and database member have different Y-STR profiles may then be used to infer that they cannot be full siblings, but it must be taken into account that they may nonetheless still be related as half-siblings, uncle-nephew, or (perhaps less likely in a DNA database) grandparent-grandchild.

\begin{figure}[h]
\caption{Probability for a half-sibling to obtain SI-rank and HSI-rank at most $n$, in a database with 100.000 15-locus profiles}
\label{rankHS15}
\includegraphics[width=12cm]{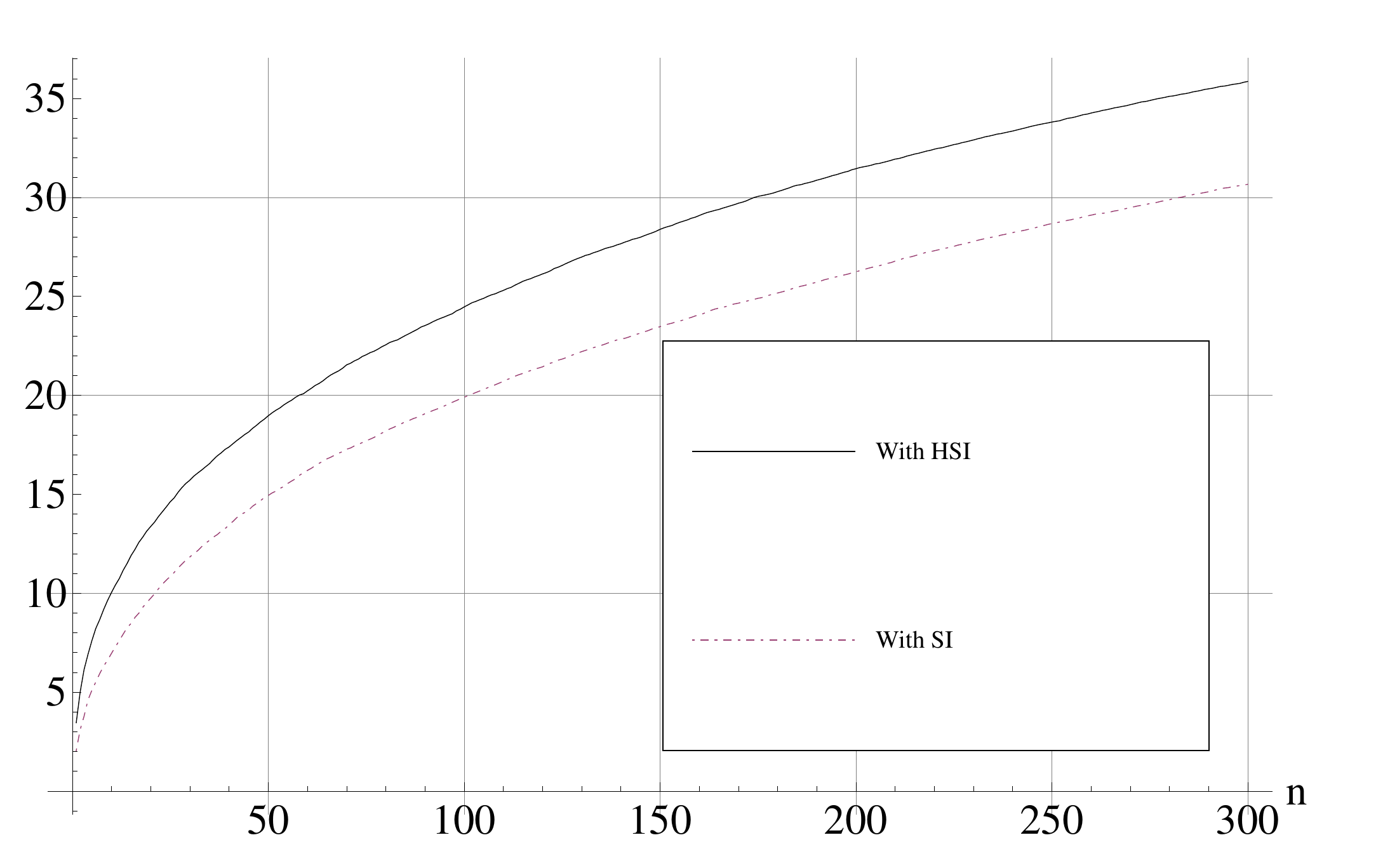}
\end{figure}

\section{Discussion}
We have defined and investigated several search strategies that construct a subset from a database with DNA profiles in such a way that the probability that they contain the relative of the target, if present in the database, with a certain minimum probability. We end this article with a brief recapitulation of their main properties and how this compares to other research published on familial searching.
First, there is the option to use or not to use prior probabilities. Using prior probabilities has the advantage of also obtaining posterior probabilities, as specified in Proposition \ref{postpi}.
It turns out that ranking according to posterior probability is the same as ranking according to the a priori probability multiplied by the likelihood ratio. Hence, in the case of a uniform prior, the database can simply be ranked according to likelihood ratio. This is, of course, well known. What can also be derived from this article however, is how much genetic similarity one expects by chance. Indeed, whatever the database's size, the type of relative, or the number of loci included in the comparison, statistically one expects the sum of the likelihood ratios with all database members to be equal to the number of people in the database, if it does not contain a relative. This is also true for direct matches, but in that case databases are too small compared to the obtainable likelihood ratios to see this effect. In contrast, when looking for relatives, our simulation study shows that for a database with 100,000 members and ten investigated loci, the observed total likelihood ratio agrees well on average with this prediction. In casework, this can give a practical check: if the sum of likelihood ratios is of comparable magnitude as the database size, there is no strong indication for the presence of the target's relative. It does not matter if this total is obtained by one large likelihood ratio or by several smaller ones.

If one does not use prior probabilities, then it is still possible to keep the probability with which the relative is detected under control by using the fact that the likelihood ratios between the target and his relatives have a distribution that can be derived from the target's DNA profile. This means that a lab that carries out familial searching can choose to fix either the desired probability of detection, the minimum likelihood ratio that warrants additional testing, or the number of additionally tested individuals. This also is of course well known, but with the results presented here one is able to derive the unfixed two parameters from the fixed one. For example, a lab may decide to guarantee a certain probability of detection. If it does so, the likelihood ratio threshold for additional DNA testing as well as the number of additionally tested individuals will vary from case to case. They can be computed or estimated before a search is carried out, which can be helpful in the phase where the familial search details are discussed with the investigating authorities, as it will give a clear picture of what to expect. On the other hand, a lab may decide to additionally test a fixed number of individuals in every familial search. If it does so, then the results presented here allow to say what the probability is, in a specific case, that a relative ranks sufficiently high.

We have also discussed the somewhat subtle relation between these two approaches. They are different, yet can have the same efficiency. The reason behind this is that the target-centered method uses other information: admission of a database member into $\D_\alpha$ is on the basis of the likelihood ratio of that database member alone, whereas for admission into $\D^\alpha$ the product of that likelihood ratio with the a priori probability has to be sufficiently big, in comparison to that of the others. The probabilities for the relative to be found by both strategies have a different frequentist interpretation: the conditional method uses more information but needs resampled databases for its frequentist interpretation, and the target-centered method uses less information but does not need resampled databases.

Finally, a good question is if one method seems preferable over another. There does not seem to be, in our opinion, a definitive argument to prefer any of the methods over the other one. As just pointed out, the conditional method has the advantage of being able to deal with prior probabilities. Hence, if these are definable, it makes sense to prefer this method. On the other hand, its probabilistic interpretation is more complex. If no prior odds exist, then the target-centered method has the advantage of having naturally interpretable probabilities, and the expected outcome of a search can be simulated beforehand, since it only depends on the size of the database and the amount of genetic information that it contains. It is therefore very suitable to perform a case pre-assessment: one can obtain estimates of how many individuals must be additionally tested, if one looks for a certain type of relative with a certain probability. Indeed, all that is needed is the target profile, so no comparisons with the actual database are required. Prior to having obtained permission for a familial search, the target-centered method therefore allows a feasibility study.
Once permission has been obtained and the likelihood ratios with the database have been computed, one can then use the conditional method to take these into account. This allows, for example, to make statements about the probability that a relative is found in the particular top-$n$ at hand, as opposed to what would a priori be expected, e.g. based on the results of sections \ref{ranking} and \ref{halfsibs}.

\appendix
\section{}
In this appendix we provide the proof of Proposition \ref{postpi}. Before doing so, we first state some properties of general likelihood ratios which are probably well known, but since we are unaware of an explicit reference, we include them with proofs as well.

\begin{prop}
\label{rec} Suppose that for all $e \in E$ we have
\begin{equation}
\label{basic}
P(\rv{S}=e)>0 \Rightarrow P(\rv{G}=e)>0. 
\end{equation}
Then we have, for all $x\geq 0$,
\[ \frac{P(\rv{LR(S)}=x)}{P(\rv{LR(G)}=x)}=x.\]
\end{prop}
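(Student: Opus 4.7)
The plan is to partition the outcome space according to the value of the likelihood ratio and then sum within each level set. Concretely, for each $x \geq 0$ I would consider the set
\[ E_x = \{e \in E : \rv{LR}(e) = x\},\]
where $\rv{LR}(e) = P(\rv{S}=e)/P(\rv{G}=e)$ is well-defined whenever $P(\rv{S}=e)>0$ by assumption \eqref{basic}, and is set to $0$ when $P(\rv{S}=e)=0$ (in which case the definition above still gives $\rv{LR}(e)=0$ as long as $P(\rv{G}=e)>0$, and the value is irrelevant for $\rv{S}$-probabilities when $P(\rv{G}=e)=0$ as well).

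Next I would compute both sides by summing the underlying pmfs over the level sets. For $x>0$, every $e \in E_x$ satisfies $P(\rv{S}=e) = x\,P(\rv{G}=e)$ by definition, so
\[ P(\rv{LR(S)}=x) = \sum_{e \in E_x} P(\rv{S}=e) = x \sum_{e \in E_x} P(\rv{G}=e) = x\, P(\rv{LR(G)}=x),\]
which is the desired identity. The division on both sides is legitimate since $x>0$ forces $P(\rv{G}=e)>0$ on $E_x$.

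The only remaining case is $x=0$, which is where one might worry: the identity reduces to $P(\rv{LR(S)}=0)=0$. But $\{\rv{LR(S)} = 0\}$ occurs only if $\rv{S}$ takes a value $e$ with $\rv{LR}(e)=0$, i.e.\ $P(\rv{S}=e)=0$, which is a $\rv{S}$-null event. So both sides are $0$ and the identity is trivially satisfied. The main (only) subtlety of the proof is therefore bookkeeping of the degenerate cases $P(\rv{S}=e)=0$ or $P(\rv{G}=e)=0$, which is precisely what hypothesis \eqref{basic} is designed to handle; apart from that, everything is a one-line rearrangement.
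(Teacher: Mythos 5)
Your proof is correct and follows essentially the same route as the paper: partition $E$ into the level sets $E_x=\{e:\rv{LR}(e)=x\}$ and use $P(\rv{S}=e)=x\,P(\rv{G}=e)$ on each level set before summing. The extra bookkeeping you do for $x=0$ and for outcomes with $P(\rv{G}=e)=0$ is sound but does not change the argument, which the paper carries out in the same one-line computation.
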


\begin{proof}
Denote the part of $E$ on which the likelihood ratio takes value $x$ by
\[ E_x=\{ e \in E \mid \rv{LR}(e)=x\}.\]
We write
\begin{eqnarray*} P(\rv{LR(S)}=x) &=& \sum_{e \in E_x} P(\rv{S}=e) \\ &=& \sum_{e \in E_x} \rv{LR}(e)P(\rv{G}=e) \\ &=& x \sum_{e \in E_x}P(\rv{G}=e) \\ &=& x P(\rv{LR(G)}=x).\end{eqnarray*}
\end{proof}

\begin{prop}
\label{e1}
Under assumption \eqref{basic} we have $E(\rv{\rv{LR}(\rv{G})})=1$.
\end{prop}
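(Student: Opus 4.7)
The plan is to exploit Proposition \ref{rec} directly: it tells us that the distribution of $\rv{LR(S)}$ and of $\rv{LR(G)}$ are related by $P(\rv{LR(S)}=x) = x\, P(\rv{LR(G)}=x)$, which is essentially the identity we need in disguise. Writing the expectation as a sum over the possible values $x$ of the likelihood ratio, I would compute
\[
E(\rv{LR(G)}) \;=\; \sum_{x\geq 0} x\, P(\rv{LR(G)} = x) \;=\; \sum_{x\geq 0} P(\rv{LR(S)} = x) \;=\; 1,
\]
using Proposition \ref{rec} in the middle equality and the fact that $\rv{LR(S)}$ is a bona fide random variable (its distribution sums to $1$) in the last equality. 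This is essentially a one-line argument once Proposition \ref{rec} is in place.

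Alternatively, and perhaps more transparently, I would compute the expectation directly from the definition $\rv{LR}(e) = P(\rv{S}=e)/P(\rv{G}=e)$ on the support of $\rv{G}$, giving
\[
E(\rv{LR(G)}) \;=\; \sum_{e : P(\rv{G}=e)>0} \frac{P(\rv{S}=e)}{P(\rv{G}=e)}\, P(\rv{G}=e) \;=\; \sum_{e : P(\rv{G}=e)>0} P(\rv{S}=e).
\]
Here I would invoke assumption \eqref{basic}: any $e$ with $P(\rv{G}=e) = 0$ also satisfies $P(\rv{S}=e) = 0$, so extending the sum to all $e \in E$ does not change anything, and the total equals $1$.

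The only subtlety, and really the only possible obstacle, is exactly the handling of values $e$ on which the likelihood ratio is not well-defined (the zero-denominator case), which is precisely what \eqref{basic} is there to take care of; the assumption ensures that those values have zero probability under $\rv{S}$ and can be dropped or assigned any convention without affecting the sum. I would probably prefer the first route, since it reuses the preceding proposition cleanly and makes the short argument essentially immediate.
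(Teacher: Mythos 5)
Your proposal is correct, and your second (direct) computation is exactly the paper's own proof: $E(\rv{LR}(\rv{G}))=\sum_e \rv{LR}(e)P(\rv{G}=e)=\sum_e P(\rv{S}=e)=1$, with \eqref{basic} handling the zero-denominator values just as you describe. Your preferred first route via Proposition \ref{rec} is only a repackaging of the same computation (Proposition \ref{rec} is itself proved by that substitution), so both variants amount to essentially the same argument as the paper's.
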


\begin{proof}
We write
\begin{eqnarray*}
E(\rv{LR(G)}) &=& \sum_{e \in E} \rv{LR}(e) P(\rv{G}=e)\\
&=& \sum_{e \in E} P(\rv{S}=e) =1.
\end{eqnarray*}
\end{proof}

Proposition \ref{e1} can be interpreted as expressing that for every choice of likelihood ratio, there will always be chance matches (likelihood ratios in favour of $\rv{S}$ whereas the data were generated by $\rv{G}$), as long as \eqref{basic} holds.
Moreover, if we expect fewer chance matches, then these matches will be stronger to the effect that the expected likelihood ratio is constant.

\medskip\noindent
{\em Proof of Proposition \ref{postpi}.}
Recall that the $\rv{P}_i$ are independent random variables distributed either as $\rv{S}$ or as $\rv{G}$, with exactly one of them distributed as $\rv{S}$. The database consists of individuals $1,\dots,N$.
We write $\rv{D}=(\rv{P_1},\dots,\rv{P_N})$ for the corresponding random vector, and
\[ \rv{LR_D}=(\rv{LR}(\rv{P_1}),\dots,\rv{LR}(\rv{P_N})),\]
for the random vector representing the likelihood ratios that we obtain from the database. 

With this notation, we first prove (\ref{nummereen}). The required probability is equal to
\[ \frac{P(\rv{LR_D}=\r \mid \rv{R}=i)P(\rv{R}=i)}{P(\rv{LR_D}=\r)},\]
which can be expanded as 
\[ \frac{P(\rv{LR_D}=\r \mid \rv{R}=i)P(\rv{R}=i)}{\sum_{k=1}^NP(\rv{LR_D}=\r \mid \rv{R}=k)P(\rv{R}=k)+P(\rv{LR_D}=\r \mid \rv{R}\notin\D)P(\rv{R} \notin \D) }.\]
Therefore, it is more attractive to consider the reciprocal, and we obtain 
\begin{eqnarray*} \frac{1}{P(\rv{R}=i \mid \rv{LR_D}=\r)}&=& \sum_{k=1}^N\frac{P(\rv{LR_D}=\r \mid \rv{R}=k)}{P(\rv{LR_D}=\r \mid \rv{R}=i)}\frac{P(\rv{R}=k)}{P(\rv{R}=i)}\\&+& \frac{P(\rv{LR_D}=\r \mid \rv{R}\notin\D)}{P(\rv{LR_D}=\r \mid \rv{R}=i)}\frac{P(\rv{R} \notin \D)}{P(\rv{R}=i)}.\end{eqnarray*}
Recall that all $\rv{P_i}$ are conditionally independent given $\rv{R}$. This means that the last expression reduces to
\begin{eqnarray*}
& &\sum_{k=1}^N\frac{P(\rv{LR}(\rv{P_i})=r_i \mid \rv{R}=k)}{P(\rv{LR}(\rv{P_i})=r_i \mid \rv{R}=i)}\frac{P(\rv{LR}(\rv{P_k})=r_k \mid \rv{R}=k)}{P(\rv{LR}(\rv{P_k})=r_k \mid \rv{R}=i)}\frac{P(\rv{R}=k)}{P(\rv{R}=i)}+\\
&+& \frac{P(\rv{LR}(\rv{P_i})=r_i \mid \rv{R}\notin\D)}{P(\rv{LR}(\rv{P_i})=r_i \mid \rv{R}=i)}\frac{P(\rv{R} 
\notin \D)}{P(\rv{R}=i)}.
\end{eqnarray*}
We claim that for $k \neq i$, we have
\begin{equation}
\label{eerste}
\frac{P(\rv{LR}(\rv{P_i})=r_i \mid \rv{R}=k)}{P(\rv{LR}(\rv{P_i})=r_i \mid \rv{R}=i)}=\frac{1}{r_i}
\end{equation}
and
\begin{equation}
\label{tweede} 
\frac{P(\rv{LR}(\rv{P_k})=r_k \mid \rv{R}=k)}{P(\rv{LR}(\rv{P_k})=r_k \mid \rv{R}=i)}=r_k.
\end{equation}
To see this, note that given $\rv{R}=k$, $\rv{P_k}$ is distributed as $\rv{S}$, and $\rv{P_i}$ 
is distributed as $\rv{G}$, and then use Proposition \ref{rec}. 
For $k=i$, the corresponding term in the sum is equal to 1. We can also apply Proposition \ref{rec} to the last term since $\rv{R} \notin \D$ implies that $\rv{P_i}$ is distributed as $\rv{G}$. From all this it follows that
\[ \frac{1}{P(\rv{R}=i \mid \rv{LR_D}=\r)}=\sum_{k=1}^N\frac{r_k}{r_i}\frac{P(\rv{R}=k)}{P(\rv{R}=i)}+\frac{1}{r_i}\frac{P(\rv{R} \notin \rv{D})}{P(\rv{R}=i)},\]
and (\ref{nummereen}) follows.

The proof of (\ref{nummertwee}) is similar, we only sketch the difference with the proof of (\ref{nummereen}). The probability in question is equal to
\[ \frac{P(\rv{LR_D}=\r \mid \rv{R}=i)P(\rv{R}=i)}{P(\rv{LR_D}=\r, \rv{R} \in \D)},\]
which can be expanded as 
\[ \frac{P(\rv{LR_D}=\r \mid \rv{R}=i)P(\rv{R}=i)}{\sum_{k=1}^NP(\rv{LR_D}=\r \mid \rv{R}=k)P(\rv{R}=k)}.\] From this point on, the proof proceeds as above.

\bibliography{biblionfi}
\bibliographystyle{amsplain}

\end{document}